\newcommand{\ExpectMeas}[2]{\mathbb{E}^{#1}\left[#2\right]}
\newcommand{\keywords}[1]{{\bf Keywords:} {#1}}
\newcommand{\subclass}[1]{{\bf Mathematics Subject Classification (2000):} {#1}}
\newcommand{\JEL}[1]{{\bf JEL:} {#1}}
\newtheorem{thm}{theorem}[section]
\newtheorem{theorem}[thm]{Theorem}
\newtheorem{proposition}[thm]{Proposition}
\newtheorem{corollary}[thm]{Corollary}
\newtheorem{lemma}[thm]{Lemma}
\newenvironment{proof}{{\bf Proof }}{\hfill$\Box$}
\newcommand{\MakeTitle}{\maketitle\newcommand{\and}{$\cdot$ }}
\title{The Impact of the Prior Density on a Minimum Relative Entropy Density: A Case Study with SPX Option Data
	\thanks{
		We would like to thank Olivier Le Courtois, Fran\c{c}ois Quittard-Pinon, Matthias Scherer and Peter Tankov for helpful comments.
        We are particularly grateful to Nabil Kahal\'{e} for his comments and the suggestion to study variance swaps.
	}
}
\author{
	Cassio Neri
	\thanks{Lloyds Banking Group, \texttt{cassio.neri@lloydsbanking.com}.}
	\and
	Lorenz Schneider
	\thanks{Center for Financial Risks Analysis (CEFRA), EMLYON Business School, \texttt{schneider@em-lyon.com}.}
}
\date{\today}
\begin{document}

\MakeTitle

\begin{abstract}
We study the problem of finding probability densities that match given European call option prices.
To allow prior information about such a density to be taken into account, we generalise the algorithm presented in \cite{NeriSchneider2011}
to find the maximum entropy density of an asset price to the relative entropy case.
This is applied to study the impact the choice of prior density has in two market scenarios.
In the first scenario, call option prices are prescribed at only a small number of strikes,
and we see that the choice of prior, or indeed its omission, yields notably different densities.
The second scenario is given by CBOE option price data for S\&P$500$ index options at a large number of strikes.
Prior information is now considered to be given by calibrated Heston, Sch\"obel-Zhu or Variance Gamma models.
We find that the resulting digital option prices are essentially the same as those given by the (non-relative) Buchen-Kelly density itself.
In other words, in a sufficiently liquid market the influence of the prior density seems to vanish almost completely.
Finally, we study variance swaps and derive a simple formula relating the fair variance swap rate to entropy.
Then we show, again, that the prior loses its influence on the fair variance swap rate as the number of strikes increases.

\bigskip

\keywords{Entropy \and Relative Entropy \and Kullback-Leibler Information Number \and Asset Distribution \and Option Pricing \and Fourier Transform \and Variance Swap}

\bigskip

\subclass{91B24 \and 91B28 \and 91B70 \and 94A17}

\bigskip

\JEL{C16 \and C63 \and G13}

\end{abstract}

\section{Introduction}
\label{s:Introduction}

Many financial derivatives are valued by calculating their expected payoff under the risk-neutral measure.
For path-independent derivatives the expectation can be obtained by integrating the product of the payoff function and the density.

If a pricing model has been chosen for a market in which many derivative products are actively quoted,
then often, due to the limited number of model parameters, this model will be unable to perfectly match the market quotes,
and a compromise must be made during model calibration by using some kind of ``best-fit'' criterion.

If no model has been chosen, one can try to imply from the market data for a given maturity a probability density function
that leads, by integration as described above, exactly back to the quoted prices. However, unless the market is perfectly liquid,
there will be infinitely many densities that match the price quotes, and some criterion for the selection of the density will have to be applied.
One such criterion is to choose the density that maximises uncertainty or, in another word, entropy. The idea is that between two densities
matching the constraints imposed by the market prices, the one that is more uncertain -- where ``uncertain'' means, very roughly speaking, spreading probability
over a large interval instead of assigning it to just a few points, where possible -- should be chosen. In general, applying the criterion of entropy
delivers convincing results. For example, on the unit interval $[0,1]$, if no constraints are given,
the density with the greatest entropy -- the maximum entropy density (MED) -- is the uniform density.
On the positive real numbers $[0, \infty[$, if the mean is given as the only constraint, the entropy maximiser is the exponential density.
There is no entropy maximiser over the real numbers $\mathbb{R}$, but if the mean and variance are imposed as constraints,
then the density with largest entropy is a Gaussian normal density.

The concept of entropy has its origins in the works of Boltzmann \cite{Boltzmann1877} in Statistical Mechanics and Shannon \cite{Shannon1948} in Information Theory,
and an important recent application has been by Villani \cite{Villani2009} and others in the field of Optimal Transport.
In Finance, too, entropy has become a popular tool, as a survey of recent literature
(see for example \cite{AvellanedaFriedmanHolmesSamperi1997, BorweinChoksiMarechal2003,
BrodyBuckleyMeister2004, BuchenKelly1996, Frittelli2000, FujiwaraMiyahara2003, Gulko1999a, OrozcoRodriguezSantosa2012}) confirms.

A third approach, which we will use to combine the two approaches described above, is to take a density $p$, which may be near to matching some imposed constraints,
as a prior density and to then find a density $q$ that is ``as close'' as possible to $p$ and exactly matches these constraints.
The criterion we will employ to measure the ``distance'' between the densities $q$ and $p$ is that of relative entropy. Since we are trying to depart as little
as possible from the prior $p$, our goal will be to find the {\it minimum} relative entropy density (MRED) $q$ that matches the given constraints.
Of course, if the prior $p$ already matches the constraints, then we can take as our solution $q = p$, since the relative entropy of $p$ with respect to itself is zero.
Relative entropy was introduced by Kullback and Leibler \cite{KullbackLeibler1951} and is also known as the {\it Kullback-Leibler information number $I = I(q \| p)$}
or {\it $I$-Divergence} \cite{Csiszar1975}. Although it is always non-negative and can be used as measure of distance, it is important to stress that it is not a metric in the
mathematical sense, since usually $I(q \| p) \neq I(p \| q)$, and the triangle inequality is not satisfied.

In the study we carry out in this paper, the prior density function of the asset price for a fixed maturity is given by a model,
such as the Black-Scholes model or the Heston stochastic volatility model. Depending on the model in question, this prior density will be either
directly available in analytical form (in the case of the Black-Scholes model a log-normal distribution), or have to be obtained numerically
(in case of the Heston model via Fourier inversion). The main impact of this will be on computation time, but otherwise the difference is of
minor consequence. The algorithms we propose to calculate the MRED $q$ (with respect to $p$) satisfying some constraints given by European option prices
are extensions of the two algorithms presented in \cite{NeriSchneider2009} and \cite{NeriSchneider2011}.

In the first case, the option data consists of call and digital call prices (section \ref{s:MRED_CallDigital}), and in the second case only of call prices (section \ref{s:MRED_Call}).
If only the call prices are imposed, say $n$ of them, the problem consists in finding the minimum of a real-valued, convex function (the relative entropy function) in $n$ variables.
If one additionally imposes the $n$ prices of digital options at the same strikes, the problem simplifies to a sequence of $n$ one-dimensional root-finding problems.
The multi-dimensional algorithm makes use of the single-dimensional one by fixing the set of call prices,
defining a parameter space $\Omega$ for arbitrage-free digital prices, and then finding the unique density in this family with the smallest relative entropy.

The models we take to generate our prior densities are presented in section \ref{s:CharacteristicFunctionModels}, together with a review of the characteristic function
pricing approach and corresponding Fourier transform techniques. In addition to the two models already mentioned above, we also consider the Sch\"obel-Zhu
stochastic volatility model and the Variance Gamma model.

In section \ref{s:NumericalExamples} we study two market scenarios. In the first one, we take a log-normal Black-Scholes and a Heston density as our priors and calculate
the MREDs that match given call prices. We then compare option prices to those obtained with an MED, and also to those obtained with another log-normal density
that matches the constraints, and observe that the price differences can be substantial.
In the second scenario, we take S\&P$500$ call option prices from the CBOE. This market is very liquid, and for the maturity we consider we have quotes for a large set of strikes.
We calibrate a Heston, Sch\"obel-Zhu and Variance Gamma model to this data and use the densities generated by these models as prior densities.
Then, we calculate the three MREDs for these priors, and compare the digital option prices they give to those given by the original models, those given by an MED,
and finally the market prices themselves, which are available in this case. We observe that it makes almost no difference which model
is chosen for the prior, and that all three MREDs essentially agree with the MED.

In section \ref{s:VarianceSwaps} we study variance swaps and the fair swap rate.
Assuming that the underlying asset follows a diffusion process without jumps, it is possible to relate this rate to the price of a log-contract.
A formula linking it to an integral over call and puts prices at varying strikes is also well known (\cite{CarrMadan1998, DemeterfiDermanKamalZou1999, Gatheral2006}.
Here, we establish a simple formula (Corollary \ref{Cor:FairVarianceSwapRate-Entropy}) that relates the fair variance swap rate to entropy.
We then give an explicit formula (see equation \ref{EiPrimitiveFunction}) for the fair variance swap rate in the case of a (non-relative)
MED in terms of the assumed drift rate and the density's parameters.
In the relative entropy case, we calculate the fair rate numerically and show that for MREDs constrained by data at very few strikes the prior density
can have a significant impact on the fair rate.
However, as in the examples given in section \ref{s:NumericalExamples}, the impact of the prior density diminishes quite strongly as data at more strikes are added as constraints.
Finally, section \ref{s:Conclusion} concludes the article.

\section{Relative Entropy and Option Prices}
\label{s:RelativeEntropy}

In this section we review the concept of relative entropy, which can be regarded as a way of measuring the ``distance'' between two given densities.
Our goal is to apply this measure to the following problem:
Given a prior density $p$, coming for example from a model that fits well, but not exactly,
European option prices observed in the market, how can we deform this density in such a way that it exactly matches these prices,
but stays as close as possible to the original density under the criterion of relative entropy?

\subsection{Relative Entropy}
\label{ss:Relative Entropy}

For two probability distributions $Q$ and $P$ the relative entropy of $Q$ with respect to $P$ is defined by
\begin{equation}
\label{DefinitionOfRelativeEntropyForDistributions}
H(Q \| P) = \left\{
\begin{array}{lc}
\displaystyle \int \ln \frac{\partial Q}{\partial P} dQ = \int \frac{\partial Q}{\partial P} \ln \frac{\partial Q}{\partial P} dP,
\quad & Q \ll P, \\
\\
\displaystyle \infty
& \text{else},
\end{array}
\right.
\end{equation}
where $\partial Q/\partial P$ is the Radon-Nikod\'ym derivative.

From the inequality $S \ln S \geq S - 1$ we have
\begin{equation*}
H(Q \| P)
= \int \frac{\partial Q}{\partial P} \ln \frac{\partial Q}{\partial P} dP
\geq \int \left( \frac{\partial Q}{\partial P} - 1 \right) dP
= \int dQ - \int dP
= 0.
\end{equation*}
We also have $H(Q \| P) = 0$ if, and only if, $Q = P$.
However, relative entropy is not a metric since, in general, $ H(Q \| P) \neq H(P \| Q)$, and the triangle inequality is not satisfied either.
Even the symmetric function $H(Q \| P) + H(P \| Q)$ does not define a metric, since it still does not satisfy the triangle inequality \cite{Csiszar1964}.

The Csisz\'ar-Kullback inequality \cite{ArnoldMarkowichToscaniUnterreiter2000, Csiszar1967} relates
relative entropy to distance between densities in the sense of the $L^1(0,\infty)$ norm:
\begin{equation*}
\| q - p \|_{L^1} := \int_0^\infty | q(S) - p(S) | dS \le \sqrt{ 2 H(q \| p) },
\end{equation*}
where
\begin{equation}
\label{DefinitionOfRelativeEntropyForDensities}
H(q \| p) = \int q(S) \ln \frac{q(S)}{p(S)} dS = \int \frac{q(S)}{pS)} \ln \frac{q(S)}{p(S)} p(S) dS
\end{equation}
is the same definition as \eqref{DefinitionOfRelativeEntropyForDistributions} above in terms of densities,
which means in particular that convergence in the sense of relative entropy implies $L^1$-convergence.

\subsection{Minimizer Matching Option Prices}
\label{ss:GeneralCaseAnyPrior}

We now give a precise formulation of the minimisation problems that we want to solve.
Let $p$ be the prior density on $[0, \infty[$ which is assumed to be strictly positive almost everywhere.

For a fixed underlying asset and maturity $T$, we are given undiscounted prices
$\tilde{C}_1$, ..., $\tilde{C}_n$ of call options at strictly increasing strikes $K_1 < \cdots < K_n$.
For notational convenience, we introduce the ``strikes'' $K_0 := 0$ and
$K_{n+1} := \infty$ and make the convention that $\tilde{C}_0$ is the forward asset price for time $T$ and $\tilde{C}_{n+1} = 0$.

In section \ref{s:MRED_Call} we will determine a density $q$ for the underlying asset price $S(T)$ at maturity which minimises relative entropy $H(q \| p)$ under the constraints
\begin{equation}
\label{ME1}
\ExpectMeas{q}{(S(T) - K_i)^+} = \tilde{C}_i, \quad
\text{{\em i.e.},} \quad
\int_{K_i}^\infty (S - K_i) q(S) dS = \tilde{C}_i \quad
\forall i = 0, ..., n.
\end{equation}

Before that, in section \ref{s:MRED_CallDigital}, we shall assume that undiscounted digital option prices $\tilde{D}_1$, ..., $\tilde{D}_n$ on the same asset, maturity and strikes are also given and we look for $q$ that, in addition, verifies the constraints
\begin{equation}
\label{ME2}
\ExpectMeas{q}{ \mathbf{I}_{\{S(T) > K_i\}} } = \tilde{D}_i, \quad
\text{{\em i.e.},} \quad
\int_{K_i}^\infty q(S) dS = \tilde{D}_i \quad
\forall i = 0, ..., n.
\end{equation}
Again, for ease of notation, we make the convention that $\tilde{D}_0 = 1$ and $\tilde{D}_{n+1} = K_{n+1} \tilde{D}_{n+1} = 0.$

Notice that the constraints \eqref{ME1} and \eqref{ME2} for $i=0$ are consistent with the fact that $q$ is a density (its integral is $1$) and the martingale condition
\begin{equation*}
\ExpectMeas{q}{S(T)} = \int_0^\infty S q(S) dS = \tilde{C}_0.
\end{equation*}

\section{Minimizer Matching Call and Digital Option Prices}
\label{s:MRED_CallDigital}

In this section we review some results stated in \cite{NeriSchneider2009} and provide the base arguments required to prove them in case a prior density $p$ is given and call and digital options prices are prescribed.

In addition, we show how the algorithm presented in \cite{NeriSchneider2009} can be efficiently implemented.
We do not assume that $p$ is given analytically, and therefore the implementation requires numerical integration.
However, the availability of the digital prices allows for an efficient solution locally in each ``bucket'', i.e., interval $[K_i, K_{i+1}[$,
via a one-dimensional Newton-Raphson rootfinder.

Formally applying the Lagrange multipliers theorem, as in \cite{BuchenKelly1996}, it can be ``proven''%
\footnote{Rigourously speaking, the Lagrange multipliers theorem cannot be applied since the relative entropy functional is nowhere continuous \cite{BorweinChoksiMarechal2003}}
that if $q$ minimises relative entropy in respect to $p$, then the Radon-Nikod\'ym derivative $g := \partial Q / \partial P = q / p$ is piecewise exponential.
More precisely, on each interval $[K_i, K_{i+1}[$ the density $q$ is given by
\begin{equation}
\label{MREDDensity}
q(S) = g(S) p(S) = \alpha_i e^{\beta_i S} p(S),
\end{equation}
where $\alpha_i, \beta_i \in {\mathbb R}, \alpha_i > 0$ are parameters that still have to be determined using the following two constraints imposed by the option data, which are an equivalent reformulation of the constraints \eqref{ME1} and \eqref{ME2} given above, but allow for an easy solution.

The first constraint follows directly from \eqref{ME2} and is given by
\begin{equation*}
\alpha_i \int_{K_{i}}^{K_{i+1}} e^{\beta_i S} p(S) dS = \tilde{D}_{i} - \tilde{D}_{i+1}, \quad
\forall i = 0, ..., n.
\end{equation*}
from which we have
\begin{equation}
\label{alpha}
\alpha_i = \frac{\tilde{D}_{i} - \tilde{D}_{i+1}}{\int_{K_{i}}^{K_{i+1}} e^{\beta_i S} p(S) dS} \quad
\forall i = 0, ..., n.
\end{equation}

The second constraint also follows directly from \eqref{ME1} and \eqref{ME2} and is given by
\begin{equation*}
\alpha_i \int_{K_{i}}^{K_{i+1}} S e^{\beta_i S} p(S) dS = (\tilde{C}_{i} + K_{i} \tilde{D}_{i}) - (\tilde{C}_{i+1} + K_{i+1} \tilde{D}_{i+1}), \quad
\forall i = 0, ..., n.
\end{equation*}
Notice that the right hand side of the equation above is the undiscounted price of an ``asset-or-nothing'' derivative that pays the asset price itself if it finishes between the two strikes $K_i$ and $K_{i+1}$ at maturity and zero otherwise.
Substituting $\alpha_i$ from \eqref{alpha} gives
\begin{equation}
\label{beta}
\frac{\int_{K_{i}}^{K_{i+1}} S e^{\beta_i S} p(S) dS}{\int_{K_{i}}^{K_{i+1}} e^{\beta_i x} p(S) dS}
= \frac{(\tilde{C}_{i} + K_{i} \tilde{D}_{i}) - (\tilde{C}_{i+1} + K_{i+1} \tilde{D}_{i+1})}{\tilde{D}_{i} - \tilde{D}_{i+1}},
\end{equation}
which we use as an implicit equation for $\beta_i$.

Later we shall {\em rigorously} show that, under non-arbitrage conditions, such $\alpha_i$ and $\beta_i$ (for $i=0$, ..., $n$) exists and that $q$ given by \eqref{MREDDensity} is indeed a relative entropy minimiser with respect to the prior density $p$ but, firstly, we need some preliminary definitions and results.

We define the cumulant generating functions $c_0$, ..., $c_n$, from $\mathbb{R}$ to $\mathbb{R}\cup\{\infty\}$, by
\begin{equation}
\label{cpi}
c_i(\beta) := \ln \left( \int_{K_{i}}^{K_{i+1}} e^{\beta S} p(S) dS \right). \end{equation}

Notice that $c_i(\beta)<\infty$, for $i < n$ and $\beta\in\mathbb{R}$, since $p\in L^1(K_i, K_{i+1})$ and the exponential function belongs to $L^\infty(K_i, K_{i+1})$.
For $i = n$, the integral is over $[K_n, \infty[$ and we can have $c_n(\beta) = \infty$.
However, $c_n(0)<\infty$ and if $e^{\hat{\beta} S} p(S)$ belongs to $L^1(K_n, \infty)$, then so does $e^{\beta S} p(S)$ for $\beta < \hat{\beta}$.
Therefore, the interior of $c_i$'s effective domain is an interval of the form $]-\infty, \beta^*[$ for some $\beta^*\ge 0$ and, for $i < n$, $\beta^*=\infty$.

\begin{proposition}
\label{Prop:Derivatives}
For $i < n$, $c_i$ is twice differentiable and strictly convex in $]-\infty, \beta^*[$.
Moreover, its first and second derivatives are given by
\begin{equation}
\label{cpi_prime}
c_i'(\beta) = \frac{\int_{K_{i}}^{K_{i+1}} S e^{\beta S} p(S) dS}{\int_{K_{i}}^{K_{i+1}} e^{\beta S} p(S) dS}
\end{equation}
and
\begin{equation}
\label{cpi_2prime}
c_i''(\beta) =
\frac{\int_{K_{i}}^{K_{i+1}} S^2 e^{\beta S} p(S) dS \int_{K_{i}}^{K_{i+1}} e^{\beta S} p(S) dS - \left( \int_{K_{i}}^{K_{i+1}} S e^{\beta S} p(S) dS \right)^2}
{\left( \int_{K_{i}}^{K_{i+1}} e^{\beta S} p(S) dS \right)^2}
\end{equation}
for all $\beta\in\ ]-\infty, \beta^*[$.
\end{proposition}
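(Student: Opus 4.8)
The plan is to write $c_i = \ln M_i$, where $M_i(\beta) := \int_{K_i}^{K_{i+1}} e^{\beta S} p(S)\, dS$ is the integral appearing inside the logarithm in \eqref{cpi}, and to obtain both derivative formulas by differentiating $M_i$ under the integral sign and then applying the chain and quotient rules. First I would record the facts that make everything finite: since $i < n$, the interval $[K_i, K_{i+1}]$ is bounded, so $p \in L^1(K_i, K_{i+1})$ and, for $k = 0, 1, 2$, the integrals $\int_{K_i}^{K_{i+1}} S^k e^{\beta S} p(S)\, dS$ converge for every $\beta \in \mathbb{R}$; moreover $M_i(\beta) > 0$ because $p > 0$ almost everywhere.

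Next I would justify differentiation under the integral sign. Fixing $\beta_0$ and a compact neighbourhood $[\beta_0 - \delta, \beta_0 + \delta]$, the partial derivatives $\partial_\beta^k\bigl(e^{\beta S} p(S)\bigr) = S^k e^{\beta S} p(S)$ are dominated, uniformly in $\beta$ over this neighbourhood, by $|S|^k e^{(|\beta_0| + \delta)|S|} p(S)$, which on the bounded interval $[K_i, K_{i+1}]$ is bounded by a constant multiple of $p(S) \in L^1$. The standard theorem on differentiation of parameter integrals then gives $M_i'(\beta) = \int_{K_i}^{K_{i+1}} S e^{\beta S} p(S)\, dS$ and $M_i''(\beta) = \int_{K_i}^{K_{i+1}} S^2 e^{\beta S} p(S)\, dS$. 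The chain rule yields $c_i' = M_i'/M_i$, which is \eqref{cpi_prime}, and the quotient rule gives $c_i'' = (M_i'' M_i - (M_i')^2)/M_i^2$, which is exactly \eqref{cpi_2prime}.

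Finally, for strict convexity I would show that the numerator of \eqref{cpi_2prime} is strictly positive. Introducing the finite positive measure $d\mu_\beta := e^{\beta S} p(S)\, dS$ on $[K_i, K_{i+1}]$, the numerator equals $\int S^2\, d\mu_\beta \int d\mu_\beta - \bigl(\int S\, d\mu_\beta\bigr)^2$, and by the Cauchy--Schwarz inequality applied to the functions $S$ and $1$ in $L^2(\mu_\beta)$ this is non-negative, with equality if and only if $S$ is $\mu_\beta$-almost everywhere constant. Since $p > 0$ a.e. and $e^{\beta S} > 0$, the measure $\mu_\beta$ assigns positive mass to every subinterval of the non-degenerate interval $[K_i, K_{i+1}]$, so $S$ cannot be constant $\mu_\beta$-a.e.; hence $c_i''(\beta) > 0$ for every $\beta$, which is precisely strict convexity. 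For $i < n$ none of the steps is genuinely hard once the interval is seen to be bounded; the only two points requiring any care are the domination estimate that legitimises differentiating under the integral, and the strictness in Cauchy--Schwarz, which is what upgrades mere convexity to \emph{strict} convexity.
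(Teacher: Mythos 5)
Your proof is correct and takes essentially the same route as the paper: both justify differentiation under the integral sign by a domination argument on compact $\beta$-neighbourhoods (the paper phrases this via the Mean Value Theorem and Lebesgue's Dominated Convergence Theorem) to obtain \eqref{cpi_prime} and \eqref{cpi_2prime}, and both then reduce strict convexity to strict positivity of the variance-type numerator in \eqref{cpi_2prime}. The only difference is cosmetic: the paper establishes that positivity by expanding the double integral $\int_{K_i}^{K_{i+1}}\int_{K_i}^{K_{i+1}}(S-R)^2 e^{\beta(S+R)}p(S)p(R)\,dS\,dR > 0$, which is exactly the symmetrization proof of the strict Cauchy--Schwarz inequality you invoke for the functions $S$ and $1$ in $L^2(\mu_\beta)$, with your version making the equality-case analysis (the source of strictness) more explicit.
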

\begin{proof}
Through standard arguments using the Mean Value Theorem and Lebesgue's Dominated Convergence Theorem one can show that for any $m\in\mathbb{Z}\cap[0, \infty[$, the function
\begin{equation*}
w(\beta) := \int_{K_i}^{K_{i+1}} S^m e^{\beta S}p(S) dS
\end{equation*}
is differentiable in $]-\infty, \beta^*[$ and its derivative can be obtained by differentiating under the integral sign.
The differentiability of $c_i$ and $c_i'$ together with \eqref{cpi_prime} and \eqref{cpi_2prime} follows immediately.

Now we shall prove that $c_i''>0$.
We start by noticing that
\begin{equation*}
\int_{K_i}^{K_{i+1}}\int_{K_i}^{K_{i+1}} (S - R)^2 e^{\beta(S + R)} p(S) p(R) dS dR > 0.
\end{equation*}
Hence,
\begin{equation*}
\frac12 \int_{K_i}^{K_{i+1}}\int_{K_i}^{K_{i+1}} (S^2 + R^2) e^{\beta(S + R)} p(S) p(R) dS dR
>
\int_{K_i}^{K_{i+1}}\int_{K_i}^{K_{i+1}} SR e^{\beta(S + R)} p(S) p(R) dS dR.
\end{equation*}
The left-hand side this last inequality can be rewritten as
\begin{equation*}
\frac12
\int_{K_i}^{K_{i+1}}\int_{K_i}^{K_{i+1}} S^2 e^{\beta(S + R)} p(S) p(R) dS dR
+
\frac12
\int_{K_i}^{K_{i+1}}\int_{K_i}^{K_{i+1}} R^2 e^{\beta(S + R)} p(S) p(R) dS dR,
\end{equation*}
or, simply as
\begin{align*}
\label{eq:lhs}
\int_{K_i}^{K_{i+1}}\int_{K_i}^{K_{i+1}} S^2 e^{\beta(S + R)} p(S) p(R) dS dR
& =
\int_{K_i}^{K_{i+1}}S^2 e^{\beta S} p(S) dS
\int_{K_i}^{K_{i+1}}    e^{\beta R} p(R) dR
\\
& =
\int_{K_i}^{K_{i+1}}S^2 e^{\beta S} p(S) dS
\int_{K_i}^{K_{i+1}}    e^{\beta S} p(S) dS,
\end{align*}
whereas its right-hand side is equal to
\begin{equation*}
\left(\int_{K_i}^{K_{i+1}} S e^{\beta S} p(S) dS\right)
\left(\int_{K_i}^{K_{i+1}} R e^{\beta R} p(R) dR\right)
=
\left(\int_{K_i}^{K_{i+1}} S e^{\beta S} p(S) dS\right)^2.
\end{equation*}
Therefore, the inequality is equivalent to $c_i''>0$.
\end{proof}

Introducing
\begin{equation}
\label{K_i_bar}
\bar{K}_i := \frac{(\tilde{C}_{i} + K_{i} \tilde{D}_{i}) - (\tilde{C}_{i+1} + K_{i+1} \tilde{D}_{i+1})}{\tilde{D}_{i} - \tilde{D}_{i+1}}
\end{equation}
and using \eqref{cpi} we can rewrite \eqref{alpha} and \eqref{beta} in the simpler forms
\begin{align}
c_i'(\beta_i) &= \bar{K}_i,
\label{beta_i}
\\
\alpha_i &= p_i e^{-c_i(\beta_i)}.
\label{alpha_i}
\end{align}
Equation \eqref{beta_i} is easily solved for $\beta_i$ with the Newton-Raphson method using \eqref{cpi_prime} and \eqref{cpi_2prime}.
Once the density $q$ has been obtained in this manner, i.e., $\alpha_i, \beta_i$ have been calculated for $i=0$, ..., $n$,
we can calculate European option prices using numerical integration.

The next results give the existence and uniqueness of such $\beta_i$ and, consequently, $\alpha_i$.

\begin{lemma}
\label{lem:Limits}
Let $m\in\mathbb{Z}\cap[0, \infty[$.
Then, for any $K\in\ ]K_i, K_{i+1}[$ we have
\[
\lim_{\beta\rightarrow\infty}\frac{\int_{K_i}^K S^m e^{\beta S}p(S)dS} {\int_K^{K_{i+1}} S^me^{\beta S}p(S)dS} =
\lim_{\beta\rightarrow-\infty}\frac{\int_K^{K_{i+1}} S^m e^{\beta S}p(S)dS} {\int_{K_i}^K S^me^{\beta S}p(S)dS} =
0.
\]
\end{lemma}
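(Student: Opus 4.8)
The plan is to prove each of the two limits by a direct exponential estimate: I bound the numerator from above and the denominator from below by explicit constants times powers of $e^\beta$, and then let $\beta\to\pm\infty$. I treat the first limit (as $\beta\to+\infty$) in detail; the second is entirely symmetric, with the roles of ``high'' and ``low'' subintervals interchanged. Throughout I use that $p$ is strictly positive almost everywhere, which is what produces strictly positive lower bounds.

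First I would fix an auxiliary point $K'\in\ ]K, K_{i+1}[$ and a small $\delta>0$ with $[K', K'+\delta]\subset\ ]K, K_{i+1}[$. For the numerator, which integrates over the lower subinterval $[K_i, K]$, I use $e^{\beta S}\le e^{\beta K}$ for $S\le K$ and $\beta>0$, so that
\[
\int_{K_i}^{K} S^m e^{\beta S} p(S)\,dS \le e^{\beta K}\int_{K_i}^{K} S^m p(S)\,dS =: A\, e^{\beta K},
\]
where $A<\infty$ because $[K_i,K]$ is bounded, $S^m$ is bounded there and $p\in L^1$. For the denominator I restrict integration to $[K', K'+\delta]$ and use $e^{\beta S}\ge e^{\beta K'}$ for $S\ge K'$, $\beta>0$, giving
\[
\int_{K}^{K_{i+1}} S^m e^{\beta S} p(S)\,dS \ge e^{\beta K'}\int_{K'}^{K'+\delta} S^m p(S)\,dS =: B\, e^{\beta K'},
\]
with $B>0$ since $p>0$ a.e. and $S^m>0$ on $[K',K'+\delta]$ (as $K'>K\ge K_i\ge 0$). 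The quotient is then at most $(A/B)\,e^{\beta(K-K')}$, and since $K-K'<0$ this tends to $0$ as $\beta\to+\infty$.

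The main obstacle is the unbounded bucket $i=n$, where $K_{n+1}=\infty$. For the second limit, as $\beta\to-\infty$, the numerator now integrates $S^m e^{\beta S} p(S)$ over $[K,\infty[$, and the naive bound $e^{\beta S}\le e^{\beta K}$ would leave the possibly divergent factor $\int_K^\infty S^m p(S)\,dS$. To avoid this I would fix a reference $\beta_0<0$, which lies in the effective domain since $\beta^*\ge 0$, and write $e^{\beta S}=e^{\beta_0 S}e^{(\beta-\beta_0)S}$; for $\beta\le\beta_0$ and $S\ge K$ one has $e^{(\beta-\beta_0)S}\le e^{(\beta-\beta_0)K}$, so the numerator is bounded by $e^{(\beta-\beta_0)K}\int_K^\infty S^m e^{\beta_0 S} p(S)\,dS$, and this remaining integral is finite by the same differentiation-under-the-integral-sign facts invoked for Proposition \ref{Prop:Derivatives}. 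Bounding the denominator over the lower bucket below by $B'\,e^{\beta K''}$ for some $K''\in\ ]K_i, K[$ exactly as above, the quotient is controlled by a constant times $e^{\beta(K-K'')}$ with $K-K''>0$, which vanishes as $\beta\to-\infty$.

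It remains to note one degenerate subcase, which is handled trivially: in the first limit with $i=n$ and $\beta^*<\infty$, for $\beta\ge\beta^*$ the denominator integral over $[K_n,\infty[$ is already $+\infty$ while the numerator over $[K_n,K]$ stays finite, so the ratio equals $0$ outright; thus it suffices to consider $\beta$ within the effective domain, where the displayed estimates apply. Combining the two cases establishes both limits and completes the proof.
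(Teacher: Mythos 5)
Your proof is correct, and it rests on the same underlying idea as the paper's: separate the supports of the numerator and denominator by a gap, so that the ratio is dominated by a constant times an exponential $e^{\beta(K-K')}$ whose exponent has the right sign. The implementations differ, though. The paper bounds the denominator below by the integral over a compact subinterval $[L,M]$ with $K<L<M<K_{i+1}$ and then applies the First Mean Value Theorem for Integration on the two bounded intervals, obtaining a bound $Ce^{\beta(S_1-S_2)}$ with $S_1-S_2\le K-L<0$; you obtain the same kind of bound from the monotonicity of $S\mapsto e^{\beta S}$ alone, which is more elementary. More substantively, your handling of the unbounded bucket $i=n$ in the $\beta\to-\infty$ limit addresses a point the paper hides behind ``treated analogously'': transposing the paper's argument there would mean applying the Mean Value Theorem with weight $S^m p(S)$ on $[K,\infty[$, i.e.\ it would need $\int_K^\infty S^m p(S)\,dS<\infty$, which need not hold for a general density $p$. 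Your factorization $e^{\beta S}=e^{\beta_0 S}e^{(\beta-\beta_0)S}$ with a fixed $\beta_0<0$ closes that gap cleanly (and the finiteness of $\int_K^\infty S^m e^{\beta_0 S}p(S)\,dS$ follows even more simply than you suggest: $S^m e^{\beta_0 S}$ is bounded on $[K,\infty[$ and $p\in L^1$). Two cosmetic remarks: your main estimate for $\beta\to+\infty$ already covers the infinite-denominator case ($i=n$, $\beta>\beta^*$), since the lower bound $Be^{\beta K'}$ holds trivially when the denominator is $+\infty$, so the ``degenerate subcase'' need not be separated out; and at $\beta=\beta^*$ itself the integral can still be finite, contrary to your ``for $\beta\ge\beta^*$'' phrasing, though this is irrelevant for the limit.
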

\begin{proof}
We shall consider only the limit when $\beta\rightarrow\infty$ since the other is treated analogously.
Choose $L$ and $M$ such that $K < L < M < K_{i+1}$.
Then, we have
\[
0 \le
\frac{\int_{K_i}^K S^me^{\beta S}p(S)dS} {\int_K^{K_{i+1}} S^me^{\beta S}p(S)dS} \le
\frac{\int_{K_i}^K S^me^{\beta S}p(S)dS} {\int_L^M S^me^{\beta S}p(S)dS}.
\]
Applying the First Mean Value Theorem for Integration yields $S_1\in[K_i, K]$ and $S_2\in [L, M]$ such that
\[
e^{\beta S_1}\int_{K_i}^K S^mp(S)dS =
\int_{K_i}^K S^me^{\beta S}p(S)dS
\quad\text{and}\quad
e^{\beta S_2}\int_L^M S^mp(S)dS =
\int_L^M S^me^{\beta S}p(S)dx.
\]
Therefore
\[
0 \le
\frac{\int_{K_i}^K S^me^{\beta S}p(S)dS} {\int_K^{K_{i+1}} S^me^{\beta S}p(S)dS} \le
\frac{e^{\beta S_1}\int_{K_i}^K S^mp(S)dS} {e^{\beta S_2}\int_L^M S^mp(S)dS} =
Ce^{\beta(S_1 - S_2)},
\]
where $C>0$ does not depend on $\beta$.
Since $S_1 - S_2 \le K - L < 0$, the result follows.
\end{proof}

\begin{proposition}
\label{Prop:LimitsOfc_i'}
We have $\lim_{\beta\rightarrow\infty} c_i'(\beta) = K_{i+1}$ and $\lim_{\beta\rightarrow-\infty} c_i'(\beta) = K_i$.
\end{proposition}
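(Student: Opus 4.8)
\section*{Proof proposal}

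The plan is to exploit the fact that, by \eqref{cpi_prime}, $c_i'(\beta)$ is exactly the expectation of $S$ under the probability measure on $[K_i, K_{i+1}]$ whose unnormalised density is $e^{\beta S} p(S)$. Being a weighted average of points lying in $[K_i, K_{i+1}]$, it satisfies $K_i \le c_i'(\beta) \le K_{i+1}$ for every admissible $\beta$, which already supplies both easy one-sided bounds. It then remains to show that, as $\beta \to \infty$, the weight $e^{\beta S}$ pushes essentially all of the mass arbitrarily close to the right endpoint $K_{i+1}$, and symmetrically for $\beta \to -\infty$; Lemma \ref{lem:Limits} is precisely the tool that quantifies this concentration.

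First I would treat the limit $\beta \to \infty$. Fix an arbitrary $K \in\ ]K_i, K_{i+1}[$ and write, using \eqref{cpi_prime},
\[
0 \le K_{i+1} - c_i'(\beta) = \frac{\int_{K_i}^{K_{i+1}} (K_{i+1} - S)\, e^{\beta S} p(S)\, dS}{\int_{K_i}^{K_{i+1}} e^{\beta S} p(S)\, dS}.
\]
Splitting the numerator at $K$, bounding $K_{i+1} - S$ by $K_{i+1} - K_i$ on $[K_i, K]$ and by $K_{i+1} - K$ on $[K, K_{i+1}]$, and bounding the denominator from below by $\int_K^{K_{i+1}} e^{\beta S} p(S)\, dS$, I obtain
\[
0 \le K_{i+1} - c_i'(\beta) \le (K_{i+1} - K_i)\, \frac{\int_{K_i}^{K} e^{\beta S} p(S)\, dS}{\int_{K}^{K_{i+1}} e^{\beta S} p(S)\, dS} + (K_{i+1} - K).
\]
By Lemma \ref{lem:Limits} with $m = 0$ the remaining ratio tends to $0$ as $\beta \to \infty$, so $\limsup_{\beta \to \infty}\big(K_{i+1} - c_i'(\beta)\big) \le K_{i+1} - K$. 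Since $K < K_{i+1}$ was arbitrary, letting $K \uparrow K_{i+1}$ forces this $\limsup$ to vanish, giving $\lim_{\beta\to\infty} c_i'(\beta) = K_{i+1}$.

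The limit $\beta \to -\infty$ is entirely symmetric: I would start from $0 \le c_i'(\beta) - K_i = \left(\int_{K_i}^{K_{i+1}} (S - K_i) e^{\beta S} p(S)\, dS\right)\big/\left(\int_{K_i}^{K_{i+1}} e^{\beta S} p(S)\, dS\right)$, split at $K$, bound the denominator below by $\int_{K_i}^{K} e^{\beta S} p(S)\, dS$, and invoke the second ratio in Lemma \ref{lem:Limits} (which tends to $0$ as $\beta \to -\infty$) before letting $K \downarrow K_i$. The only point needing separate care is the unbounded bucket $i = n$, where $K_{n+1} = \infty$ and the $\limsup$ estimate is vacuous; there I would keep only the lower bound, noting that $c_n'(\beta) \ge K\,\int_K^{\infty} e^{\beta S} p(S)\, dS\big/\int_{K_n}^{\infty} e^{\beta S} p(S)\, dS$ and that this ratio tends to $1$ by Lemma \ref{lem:Limits}, so $\liminf_{\beta\to\infty} c_n'(\beta) \ge K$ for every finite $K$, i.e.\ $c_n'(\beta) \to \infty = K_{n+1}$. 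Beyond this unbounded-endpoint case the argument is routine; the only genuine work is the bookkeeping of the split and choosing the correct ratio from Lemma \ref{lem:Limits}, so I anticipate no serious obstacle.
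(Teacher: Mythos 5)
Your argument is genuinely different from the paper's and, for the bounded buckets, arguably cleaner: you sandwich $K_{i+1}-c_i'(\beta)$ (resp.\ $c_i'(\beta)-K_i$) additively, split at an intermediate $K$, and need only the $m=0$ case of Lemma \ref{lem:Limits}, whereas the paper factors $c_i'(\beta)$ multiplicatively into a bracket tending to $1$ (which uses the lemma with both $m=0$ and $m=1$) times a mean localized on $[K,K_{i+1}]$ (resp.\ $[K_i,K]$), identified via the First Mean Value Theorem for Integration. For $i<n$ your proof is complete, and your separate treatment of the unbounded bucket as $\beta\to+\infty$ (keeping only the lower bound and letting $K\to\infty$) is correct and essentially what the paper does there too.

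There is, however, a gap: the case $i=n$, $\beta\to-\infty$ is not covered. Your ``entirely symmetric'' argument bounds the right-hand piece of the numerator by a constant, $(S-K_n)\le K_{n+1}-K_n$ on $[K,K_{n+1}[$, and for $i=n$ that constant is infinite, so the estimate is vacuous --- exactly the defect you yourself flagged for the $\limsup$ bound in the $+\infty$ direction, except that your separate-care paragraph repairs only the $+\infty$ direction. The repair is cheap but needs the $m=1$ case of Lemma \ref{lem:Limits}: on $[K,\infty[$ use $S-K_n\le S$, and bound the denominator below by $\int_{K_n}^{K}e^{\beta S}p(S)\,dS\ge K^{-1}\int_{K_n}^{K}S\,e^{\beta S}p(S)\,dS$ (valid since $0\le S\le K$ there), which gives
\[
\frac{\int_K^{\infty}(S-K_n)\,e^{\beta S}p(S)\,dS}{\int_{K_n}^{K}e^{\beta S}p(S)\,dS}
\;\le\;
K\,\frac{\int_K^{\infty}S\,e^{\beta S}p(S)\,dS}{\int_{K_n}^{K}S\,e^{\beta S}p(S)\,dS}
\;\longrightarrow\; 0
\qquad (\beta\to-\infty)
\]
by the second limit in Lemma \ref{lem:Limits} with $m=1$; then let $K\downarrow K_n$ as before. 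It is worth noting why the paper's route does not meet this obstacle: its factorization localizes the mean onto the \emph{bounded} interval $[K_i,K]$ when $\beta\to-\infty$, so the unboundedness of the last bucket never enters, which is precisely why the paper's ``treated analogously'' is legitimate there while the literal mirror image of your $+\infty$ argument is not.
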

\begin{proof}
Here again, we only consider the first limit since the other is treated analogously.
For all $K\in\ ]K_i, K_{i+1}[$ we have
\[
c_i'(\beta) =
\frac{\int_{K_{i}}^{K_{i+1}} S e^{\beta S} p(S) dS}{\int_{K_{i}}^{K_{i+1}} e^{\beta S} p(S) dS} =
\left[\frac{
  \left(\int_{K_i}^K S e^{\beta S} p(S) dS\right)
  \left(\int_K^{K_{i+1}} S e^{\beta x} p(S) dS\right)^{-1} + 1
}{
  \left(\int_{K_i}^K e^{\beta S} p(S) dS\right)
  \left(\int_K^{K_{i+1}} e^{\beta S} p(S) dS\right)^{-1} + 1
}
\right]
\cdot
\frac{
  \int_K^{K_{i+1}} S e^{\beta S} p(S) dS
}{
  \int_K^{K_{i+1}} e^{\beta S} p(S) dS
}.
\]
Using the previous Lemma, we obtain that the term inside square brackets goes to $1$ as $\beta\rightarrow\infty$.
Now we shall consider the last term above and show that, by choosing a suitable $K$, it is as {\em close} to $K_{i+1}$ as we want.

Firstly, we assume $i < n$.
Then $K_{i+1} < \infty$ and, given a small $\varepsilon>0$, we choose $K = K_{i+1} - \varepsilon$.
The First Mean Value Theorem for Integration gives $S_1\in[K_{i+1} - \varepsilon, K_{i+1}]$ such that
\[
\frac{\int_{K_{i+1} - \varepsilon}^{K_{i+1}} S e^{\beta S} p(S) dS}
{\int_{K_{i+1} - \varepsilon}^{K_{i+1}} e^{\beta S} p(S) dS} =
\frac{S_1\int_{K_{i+1} - \varepsilon}^{K_{i+1}} e^{\beta S} p(S) dS}
{\int_{K_{i+1} - \varepsilon}^{K_{i+1}} e^{\beta S} p(S) dS} =
S_1.
\]
Now, for $i = n$, we have $K_{i+1} = \infty$ and
\[
\frac{\int_K^{\infty} S e^{\beta S} p(S) dS}
{\int_K^{\infty} e^{\beta S} p(S) dS} \ge
\frac{K\int_K^{\infty} e^{\beta S} p(S) dS}
{\int_K^{\infty} e^{\beta S} p(S) dS} =
K.
\]
Hence, the term above goes to $K_{i+1} = \infty$ as $K$ goes to $\infty$.
\end{proof}

\begin{corollary}
\label{Cor:ExistenceOfBeta_i}
For all $i = 0$, ..., $n$, under the non-arbitrage condition $K_i < \bar K_i < K_{i+1}$, where $\bar K_i$ is defined in \eqref{K_i_bar}, there exists a unique solution $\beta_i\in\mathbb{R}$ of \eqref{beta_i}.
\end{corollary}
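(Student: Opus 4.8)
The plan is to read \eqref{beta_i} as the single scalar equation $c_i'(\beta_i)=\bar K_i$ and to establish, for each $i$, that $c_i'$ is a continuous, strictly increasing bijection from the interior $]-\infty,\beta^*[$ of the effective domain of $c_i$ onto the open interval $]K_i,K_{i+1}[$. Once this is in hand, the non-arbitrage hypothesis $K_i<\bar K_i<K_{i+1}$ places $\bar K_i$ in the range, so a solution $\beta_i$ exists by surjectivity and is unique by injectivity; equation \eqref{alpha_i} then determines the corresponding $\alpha_i$.

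First, for $i<n$ I would invoke Proposition~\ref{Prop:Derivatives}: since $c_i''>0$ on all of $\mathbb{R}$, the derivative $c_i'$ is continuous and strictly increasing there, hence injective. Proposition~\ref{Prop:LimitsOfc_i'} then supplies the two boundary values $\lim_{\beta\to-\infty}c_i'(\beta)=K_i$ and $\lim_{\beta\to+\infty}c_i'(\beta)=K_{i+1}$. Continuity together with these limits gives, via the Intermediate Value Theorem, that every value in $]K_i,K_{i+1}[$ is attained, while strict monotonicity guarantees it is attained exactly once. Applying this to $\bar K_i$, which by hypothesis lies in $]K_i,K_{i+1}[$, yields the unique $\beta_i$ solving \eqref{beta_i}.

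The delicate case, and the one I expect to be the main obstacle, is $i=n$. Here the effective domain of $c_n$ is only a half-line $]-\infty,\beta^*[$ with possibly $\beta^*<\infty$, and the strict-convexity statement of Proposition~\ref{Prop:Derivatives} was recorded only for $i<n$. I would first note that the double-integral argument used to prove $c_i''>0$ goes through verbatim for any $\beta$ in the interior of the domain, so $c_n'$ is still continuous and strictly increasing on $]-\infty,\beta^*[$, and likewise the limit $\lim_{\beta\to-\infty}c_n'(\beta)=K_n$ carries over. The real work is controlling $c_n'$ as $\beta$ approaches the right end of the domain: one needs the supremum of $c_n'$ over $]-\infty,\beta^*[$ to equal $K_{n+1}=\infty$ so that the range is the full interval $]K_n,\infty[$ and every admissible $\bar K_n>K_n$ is covered. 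This is precisely the content of the $i=n$ part of Proposition~\ref{Prop:LimitsOfc_i'}, which I would lean on to close the surjectivity step. With continuity, strict monotonicity and the two endpoint values in place, the Intermediate Value Theorem and injectivity then finish the argument for $i=n$ exactly as in the case $i<n$.
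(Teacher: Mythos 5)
Your proof is correct and follows essentially the same route as the paper's: continuity and strict monotonicity of $c_i'$ (Proposition~\ref{Prop:Derivatives}) combined with the boundary limits of Proposition~\ref{Prop:LimitsOfc_i'} and the Intermediate Value Theorem. If anything, you are more careful than the paper, which cites Proposition~\ref{Prop:Derivatives} for all $i$ even though it is stated only for $i<n$; your observation that the strict-convexity argument extends verbatim to $i=n$ on $]-\infty,\beta^*[$ fills that small gap.
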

\begin{proof}
Proposition \ref{Prop:Derivatives} gives that $c_i'$ is continuous and the last Proposition states that $\lim_{\beta\rightarrow-\infty}c_i'(\beta) = K_i$ and $\lim_{\beta\rightarrow\infty}c_i'(\beta) = K_{i+1}$.
Hence the existence follows from the Intermediate Value Theorem.
Additionally, Proposition \ref{Prop:Derivatives} also gives that $c_i'$ is strictly increasing and the uniqueness follows.
\end{proof}

\begin{theorem}
\label{Thm:Minimiser}
For all $i = 0$, ..., $n$, let $\alpha_i$ and $\beta_i$ be defined by equations \eqref{beta_i} and \eqref{alpha_i}.
Then $q : [0, \infty[ \rightarrow {\mathbb R}$ given by
\begin{equation*}
q(S) = \alpha_i e^{\beta_i S} p(S) \quad \forall S \in [K_i, K_{i+1}[
\end{equation*}
minimises $H(q || p)$.
\end{theorem}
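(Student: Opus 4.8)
The plan is to bypass the Lagrange-multiplier heuristic (which, as the footnote observes, is not rigorous here because the relative entropy functional is nowhere continuous) and instead verify directly that the explicit candidate $q$ is a \emph{global} minimiser, via a Gibbs-type variational inequality. Concretely, I would let $\tilde q$ denote an arbitrary competing density on $[0,\infty[$ satisfying the same constraints \eqref{ME1} and \eqref{ME2}, and show $H(\tilde q\|p)\ge H(q\|p)$; the admissible class is exactly those $\tilde q$ reproducing the prescribed call and digital prices.

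First I would record what the constraints say bucket by bucket: matching all $\tilde C_i$ and $\tilde D_i$ is equivalent to fixing, on every interval $[K_i,K_{i+1}[$, the mass $\mu_i:=\int_{K_i}^{K_{i+1}}\tilde q(S)\,dS=\tilde D_i-\tilde D_{i+1}$ and the first moment $\int_{K_i}^{K_{i+1}} S\,\tilde q(S)\,dS=(\tilde C_i+K_i\tilde D_i)-(\tilde C_{i+1}+K_{i+1}\tilde D_{i+1})$. By the very definition of $\alpha_i,\beta_i$ through \eqref{alpha}--\eqref{beta} (equivalently \eqref{beta_i}--\eqref{alpha_i}), the candidate $q$ meets these two constraints on each bucket, with $\beta_i$ existing and being unique by Corollary \ref{Cor:ExistenceOfBeta_i}; so $q$ is admissible, and since $q=\alpha_i e^{\beta_i S}p$ with $\alpha_i>0$ and $p>0$ a.e., we have $q\ll p$ and a short computation gives $H(q\|p)<\infty$.

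The heart of the argument is a single algebraic identity exploiting that $\ln\frac{q}{p}=\ln\alpha_i+\beta_i S$ is \emph{affine} in $S$ on each bucket. Writing $\ln\frac{\tilde q}{p}=\ln\frac{\tilde q}{q}+\ln\frac{q}{p}$ and integrating over $[K_i,K_{i+1}[$, the cross term becomes $\ln\alpha_i\int \tilde q+\beta_i\int S\,\tilde q$, which depends on $\tilde q$ only through its bucket mass and first moment — precisely the quantities the constraints pin down. Hence this term is identical for $\tilde q$ and for $q$, and after summing over $i=0,\dots,n$ the cross terms cancel, leaving
\[
H(\tilde q\|p)-H(q\|p)=\sum_{i=0}^{n}\int_{K_i}^{K_{i+1}}\tilde q(S)\,\ln\frac{\tilde q(S)}{q(S)}\,dS.
\]
Each summand equals $\mu_i$ times the relative entropy of the normalised restriction $\tilde q/\mu_i$ with respect to $q/\mu_i$ on the bucket, hence is nonnegative by the Gibbs inequality $H(\,\cdot\,\|\,\cdot\,)\ge 0$ established just below \eqref{DefinitionOfRelativeEntropyForDistributions}. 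This yields $H(\tilde q\|p)\ge H(q\|p)$ and proves minimality.

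The main obstacle is analytic rather than algebraic: justifying the split of the integral and the cancellation when $H(\tilde q\|p)$ is merely assumed finite (the case $H(\tilde q\|p)=\infty$ being trivial). I would control the last bucket $[K_n,\infty[$ with care, confirming that $\int_{K_n}^\infty S\,\tilde q(S)\,dS$ is finite (guaranteed by the $\tilde C_n$ constraint) and that $\beta_n$ lies in the interior $]-\infty,\beta^*[$ of the effective domain, so that $e^{\beta_n S}p\in L^1(K_n,\infty)$ and $H(q\|p)$ is finite there as well. The pointwise bound $S\ln S\ge S-1$ keeps each bucket integrand integrable from below, legitimising the termwise manipulation and the interchange of summation and integration.
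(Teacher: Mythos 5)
Your proof is correct, but it takes a different route from the paper, whose ``proof'' of this theorem contains no argument at all: it simply defers to Theorem 2.6 of \cite{NeriSchneider2009}, which in turn invokes a projection theorem of Csisz\'ar (stated there as Theorem 2.5). What you have done is, in effect, reprove the relevant special case of Csisz\'ar's theorem: your bucket-wise cancellation of the cross term is exactly the Pythagorean identity $H(\tilde q\,\|\,p)=H(\tilde q\,\|\,q)+H(q\,\|\,p)$ for $I$-projections onto families defined by linear constraints, which is the content of the cited result. So the mathematical core is the same, but your version is self-contained and makes transparent the only two structural facts that matter: $\ln(q/p)$ is affine on each bucket, and the call/digital constraints pin down precisely the bucket masses and first moments, so the cross term $\ln\alpha_i\int\tilde q+\beta_i\int S\tilde q$ is the same for every admissible competitor. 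What the citation buys instead is generality (arbitrary linear constraint families, existence and uniqueness statements) and brevity. Your treatment of the analytic fine print is also sound: the case $H(\tilde q\,\|\,p)=\infty$ is vacuous; strict positivity of $p$ makes $q$ and $p$ mutually absolutely continuous, so $\tilde q\ll p$ if and only if $\tilde q\ll q$; the bound $s\ln s\ge s-1$ (applied with $s=\tilde q/q$ and multiplied by $q$) bounds the integrand $\tilde q\ln(\tilde q/q)$ below by $\tilde q-q\in L^1$, which legitimises the splitting of the integral; the finiteness of $H(q\,\|\,p)$ on the last bucket follows, as you say, from $\beta_n$ lying in the interior of the effective domain of $c_n$; and since there are only $n+1$ buckets, no genuine interchange of infinite summation and integration is required. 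As a bonus, your argument yields slightly more than the statement asks: the gap is exactly $H(\tilde q\,\|\,q)$, which is strictly positive unless $\tilde q=q$ a.e., so the minimiser is unique.
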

\begin{proof}
This is shown just as Theorem 2.6 in \cite{NeriSchneider2009} by using Theorem 2.5 by Csisz\'ar also stated there.
\end{proof}

\subsection{The Special Case of Non-Relative Entropy}
\label{ss:LebesguePrior}

The non-relative entropy can be seen as special case of the relative entropy for which no prior $p$ is given or, roughly speaking, the prior is given by Lebesgue-measure $p \equiv 1$.
Then equation \eqref{cpi} reduces to the following analytic expression:
\begin{equation*}
c_i(\beta) = \left\{
\begin{array}{cl}
\displaystyle \ln \left( \frac{e^{\beta K_{i+1}} - e^{\beta K_i}}{\beta} \right) \quad & \text{for } i < n \text{ and } \beta \neq 0, \\
\\
\displaystyle \ln(K_{i+1} - K_i) & \text{for } i < n \text{ and } \beta = 0,\\
\\
\displaystyle \ln\left(-\frac{e^{\beta K_i}}{\beta}\right) & \text{for } i = n \text{ and } \beta < 0,
\end{array}
\right.
\end{equation*}
and the first and second derivatives \eqref{cpi_prime} and \eqref{cpi_2prime} reduce to
\begin{equation*}
c_i'(\beta) = \left\{
\begin{array}{cl}
\displaystyle \frac{K_{i+1} e^{\beta K_{i+1}} - K_i e^{\beta K_i}}{e^{\beta K_{i+1}} - e^{\beta K_i}} - \frac{1}{\beta} \quad & \text{for } i < n \text{ and } \beta \neq 0, \\
\\
\displaystyle \frac{K_{i+1} + K_i}{2} & \text{for } i < n \text{ and } \beta = 0, \\
\\
\displaystyle K_i - \frac{1}{\beta} \quad & \text{for } i = n \text{ and } \beta < 0,
\end{array}
\right.
\end{equation*}
and
\begin{equation*}
c_i''(\beta) = \left\{
\begin{array}{cl}
- (K_{i+1} - K_i)^2
  \dfrac{e^{\beta(K_{i+1} + K_i)}}{(e^{\beta K_{i+1}} - e^{\beta K_i})^2}
+ \dfrac1{\beta^2}
\quad & \text{for } i < n \text{ and } \beta \neq 0, \\
\\
\dfrac{(K_{i+1} - K_i)^2}{12} & \text{for } i < n \text{ and } \beta = 0, \\
\\
\dfrac1{\beta^2} \quad & \text{for } i = n \text{ and } \beta < 0.
\end{array}
\right.
\end{equation*}
Using these expressions instead of \eqref{cpi}, \eqref{cpi_prime}, \eqref{cpi_2prime}
allows for numerical integration to be avoided in an implementation in this case.

\section{Minimizer Matching Call Option Prices}
\label{s:MRED_Call}

Buchen and Kelly describe in \cite{BuchenKelly1996} a multi-dimensional Newton-Raphson algorithm to find the maximum entropy distribution (MED)
if only call prices are given as constraints. The entropy $H$ of a probability density $q$ over $[0, \infty[$ is given by
\begin{equation*}
H(q) = -\int_0^\infty q(S) \ln q(S) dS.
\end{equation*}
The minus sign in the definition ensures that $H$ is always positive for discrete densities (where the integral sign is replaced by a sum in the definition).
For continuous densities, $H$ is usually, but not always, positive. For example, the uniform density $q(S) \equiv u$ over the interval $[0, u^{-1}]$
has negative entropy $H(q) = -\int_0^{u^{-1}} u \ln u \; dx = -\ln u < 0$ for $u>1$.

In \cite{NeriSchneider2011}, we show how the results of \cite{NeriSchneider2009} together with the Legendre transform can be applied to
obtain a fast and more robust Newton-Raphson algorithm to calculate the Buchen-Kelly MED.
The main reason the algorithm is more stable is that the Hessian matrix has a very simple tridiagonal form.
In section II.A of their paper, Buchen and Kelly also consider the case of ``minimum cross entropy'' (which we call relative entropy here) for a given prior density.

We now show how essentially the same algorithm as that in \cite{NeriSchneider2011} can be applied to the relative entropy case.
The next proposition consolidates and generalises the results of section 4 of \cite{NeriSchneider2011}, describing the entropy $H$, the gradient vector and the Hessian matrix to the case in which a prior density $p$ is given.

Arbitrage free digital prices must lie between left and right call-spread prices, i.e.,
\begin{equation}
\label{NoArbitrageDigitals}
-\frac{\tilde{C}_{i} - \tilde{C}_{i-1}}{K_i - K_{i-1}} > \tilde{D}_i > -\frac{\tilde{C}_{i+1} - \tilde{C}_i}{K_{i+1} - K_i}, \qquad \forall i = 1, ..., n,
\end{equation}
where the rightmost quantity for $i = n$ must be read as zero.

We introduce the set $\Omega \subset \mathbb{R}^n$ of all $\tilde{D} = (\tilde{D}_1, ..., \tilde{D}_n) \in \mathbb{R}^n$ verifying \eqref{NoArbitrageDigitals}.
Note that $\Omega$ is an open $n$-dimensional rectangle.
Define $q_{\tilde{D}}$ as the density obtained as in Theorem \ref{Thm:Minimiser} for given (undiscounted) digital prices $\tilde{D}$.

\begin{proposition}
\label{Prop:Summary}
For all $\tilde{D} \in \Omega$ the relative entropy of $q_{\tilde{D}}$ with respect to $p$ can be expressed as
\begin{equation*}
H(q_{\tilde{D}} \| p) = \sum_{i=0}^n p_i \ln p_i + \sum_{i=0}^n p_i c_i^*(\bar{K}_i),
\end{equation*}
where $c_i^*$ is the Legendre transform of $c_i$,
$p_i := \tilde{D}_{i} - \tilde{D}_{i+1}$ and $\bar{K}_i$ is given by \eqref{K_i_bar}.

As a function of digital prices, $H:\Omega\rightarrow \mathbb{R}$ is strictly convex, twice differentiable and, for all $\tilde{D}\in\Omega$, we have
\begin{equation*}
\frac{\partial H}{\partial \tilde{D}_i}(\tilde{D}) = \ln g_{\tilde{D}}(K_i+) - \ln g_{\tilde{D}}(K_i-), \quad \forall i = 1, ..., n,
\end{equation*}
where $g_{\tilde{D}} := q_{\tilde{D}}/p$ and
\begin{equation*}
g_{\tilde{D}}(K_i-) := \lim_{S\rightarrow K_i^-} g_{\tilde{D}}(S) = \alpha_{i-1} e^{\beta_{i-1} K_i}
\quad \text{and} \quad
g_{\tilde{D}}(K_i+) := \lim_{S\rightarrow K_i^+} g_{\tilde{D}}(S) = \alpha_i e^{\beta_i K_i},
\end{equation*}
with $\alpha_i$ and $\beta_i$ given by \eqref{beta_i} and \eqref{alpha_i}, for all $i=0, ..., n$.

In addition, the Hessian matrix at any $\tilde{D}\in\Omega$ is symmetric and tridiagonal with entries given by
\begin{align*}
\dfrac{\partial^2 H}{\partial \tilde{D}_i^2}(\tilde{D})
&=
\dfrac{1}{p_{i-1}}
+ \dfrac{1}{p_i}
+ \dfrac{(K_i - \bar{K}_{i-1})^2}{p_{i-1} c_{i-1}''(\beta_{i-1})}
+ \dfrac{(\bar{K}_i - K_i)^2}{p_i c_i''(\beta_i)}, \quad &
\forall i &= 1, ..., n, \\
\\
\dfrac{\partial^2 H}{\partial \tilde{D}_i \partial \tilde{D}_{i+1}}(\tilde{D}) &=
-\dfrac{1}{p_i}
+ \dfrac{(\bar{K}_i - K_i)(K_{i+1}-\bar{K}_i)}{p_i c_i''(\beta_i)}, \quad &
\forall i &= 1, ..., n - 1,
\end{align*}
\end{proposition}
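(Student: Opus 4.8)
The plan is to evaluate $H(q_{\tilde D}\|p)$ at the known minimiser from Theorem \ref{Thm:Minimiser} to get the closed form, and then to differentiate that form in $\tilde D$, exploiting the fact that the bucket masses and the asset-or-nothing prices are \emph{linear} in $\tilde D$, while all the dependence on $\beta_i$ is governed by the single relation \eqref{beta_i}, namely $c_i'(\beta_i)=\bar K_i$.

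First I would establish the entropy formula. Inserting $q=\alpha_i e^{\beta_i S}p$ into $H(q\|p)=\int q\ln(q/p)\,dS$ and using $\ln(q/p)=\ln\alpha_i+\beta_i S$ on each bucket gives
\[
H(q_{\tilde D}\|p)=\sum_{i=0}^n\left(\ln\alpha_i\int_{K_i}^{K_{i+1}}q\,dS+\beta_i\int_{K_i}^{K_{i+1}}Sq\,dS\right).
\]
By \eqref{ME2} the first integral equals $p_i=\tilde D_i-\tilde D_{i+1}$, and by \eqref{ME1} together with \eqref{K_i_bar} the second equals $p_i\bar K_i$. Substituting $\ln\alpha_i=\ln p_i-c_i(\beta_i)$ from \eqref{alpha_i} yields $H=\sum_i p_i\ln p_i+\sum_i p_i\bigl(\beta_i\bar K_i-c_i(\beta_i)\bigr)$. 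Since $c_i$ is strictly convex and differentiable by Proposition \ref{Prop:Derivatives} and $\beta_i$ solves $c_i'(\beta_i)=\bar K_i$, this $\beta_i$ is precisely the maximiser in the Legendre transform at the point $\bar K_i$, so $\beta_i\bar K_i-c_i(\beta_i)=c_i^*(\bar K_i)$, which is the claimed expression.

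For the gradient I would use $p_i=\tilde D_i-\tilde D_{i+1}$ and $p_i\bar K_i=(\tilde C_i+K_i\tilde D_i)-(\tilde C_{i+1}+K_{i+1}\tilde D_{i+1})$ to compute $\partial p_i/\partial\tilde D_j$ and $\partial\bar K_i/\partial\tilde D_j$, both nonzero only for $j\in\{i,i+1\}$; in particular $\partial\bar K_i/\partial\tilde D_i=(K_i-\bar K_i)/p_i$ and $\partial\bar K_i/\partial\tilde D_{i+1}=(\bar K_i-K_{i+1})/p_i$. Differentiating the closed form and invoking the envelope identity $(c_i^*)'(\bar K_i)=\beta_i$, only the terms $i=j$ and $i=j-1$ survive; after cancellation using $c_i^*(\bar K_i)=\beta_i\bar K_i-c_i(\beta_i)$ and $\ln\alpha_i=\ln p_i-c_i(\beta_i)$ the result collapses to $(\ln\alpha_j+\beta_jK_j)-(\ln\alpha_{j-1}+\beta_{j-1}K_j)=\ln g_{\tilde D}(K_j+)-\ln g_{\tilde D}(K_j-)$. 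For the Hessian I would differentiate $G_i=\ln\alpha_i+\beta_iK_i-\ln\alpha_{i-1}-\beta_{i-1}K_i$ once more; the new ingredient is the sensitivity of $\beta_i$, obtained by implicit differentiation of $c_i'(\beta_i)=\bar K_i$, giving $\partial\beta_i/\partial\tilde D_j=\bigl(\partial\bar K_i/\partial\tilde D_j\bigr)/c_i''(\beta_i)$, again nonzero only for $j\in\{i,i+1\}$. Since $-c_i(\beta_i)+\beta_iK_i$ has $\beta_i$-derivative $K_i-\bar K_i$ and $c_{i-1}(\beta_{i-1})-\beta_{i-1}K_i$ has $\beta_{i-1}$-derivative $\bar K_{i-1}-K_i$, collecting the $p$- and $\beta$-contributions reproduces the stated diagonal and super-diagonal entries; entries with $|i-j|\ge2$ vanish because none of $p_{i-1},p_i,\beta_{i-1},\beta_i$ depend on such $\tilde D_j$, giving tridiagonality, and symmetry follows from Schwarz's theorem or a direct check.

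For the regularity and convexity claims, twice differentiability follows from the implicit and inverse function theorems: $c_i''>0$ makes $\beta_i$ a $C^1$ function of $\bar K_i$, hence of $\tilde D$ (the denominator $p_i>0$), and makes $c_i^*$ twice differentiable with $(c_i^*)''=1/c_i''$. For strict convexity I would avoid the tridiagonal Hessian and instead write $H=\sum_i p_i\ln p_i+\sum_i p_i\,c_i^*(A_i/p_i)$ with $p_i$ and $A_i:=p_i\bar K_i$ linear in $\tilde D$: each term $p_i\,c_i^*(A_i/p_i)$ is the perspective of the convex function $c_i^*$, hence jointly convex in $(A_i,p_i)$, so the second sum is convex, while the first sum is the pullback of the strictly convex negative entropy under the injective linear map $\tilde D\mapsto(p_0,\dots,p_n)$, hence strictly convex. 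The step I expect to be most delicate is exactly this strictness: the perspective terms are only convex (they are affine along rays), so all strictness must come from the entropy term, which amounts to verifying that $J^{\top}\mathrm{diag}(1/p_i)J$ is positive definite for the full-rank Jacobian $J$ of $\tilde D\mapsto(p_0,\dots,p_n)$; the limiting behaviour of the terminal bucket $[K_n,\infty[$ also needs the integrability already secured before Proposition \ref{Prop:Derivatives}.
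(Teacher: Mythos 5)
Your proof is correct, and it is worth noting how it relates to the paper's. The paper's own proof consists only of the opening computation you also perform (restrict to a bucket, use $\ln(q_{\tilde{D}}/p)=\ln\alpha_i+\beta_iS$, then invoke \eqref{beta_i} and \eqref{alpha_i}); everything else — the Legendre-transform identity, the gradient, the Hessian, convexity — is delegated by citation to the proofs of Theorems 4.1, 4.2, Lemma 5.1 and Proposition 5.2 of \cite{NeriSchneider2011}, with the remark that the generalised $c_i$, $c_i^*$ and the sign convention carry those arguments over. You instead give a self-contained derivation: the same bucket decomposition and the identity $c_i^*(\bar{K}_i)=\beta_i\bar{K}_i-c_i(\beta_i)$ (valid because $\beta_i$ solves $c_i'(\beta_i)=\bar{K}_i$ and $c_i$ is strictly convex), then the gradient via the envelope identity $(c_i^*)'(\bar{K}_i)=\beta_i$ and the linearity of $p_i$ and $p_i\bar{K}_i$ in $\tilde{D}$, then the Hessian via implicit differentiation of $c_i'(\beta_i)=\bar{K}_i$; I checked these computations and they reproduce exactly the stated diagonal and off-diagonal entries, including the cancellation of the $+1$ terms in the gradient and the symmetry/tridiagonality claims. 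Your strict-convexity argument — writing the second sum as perspectives $p_i\,c_i^*(A_i/p_i)$ of the convex functions $c_i^*$ (convex on the cone $\{p_i>0,\ K_i<A_i/p_i<K_{i+1}\}$, into which $\Omega$ maps by the no-arbitrage inequalities) and extracting all strictness from the pullback of $\sum_i p_i\ln p_i$ under the injective linear map $\tilde{D}\mapsto(p_0,\dots,p_n)$ — is a genuinely nice alternative: it yields strict convexity of $H$ on $\Omega$ without having to prove positive definiteness of the tridiagonal Hessian, and you correctly flag the one delicate point, namely that the perspective terms alone are not strictly convex. In short: same skeleton as the paper where the paper shows its skeleton, with the outsourced parts filled in by direct calculus, plus a cleaner route to strict convexity; the only mild caveat is that for the terminal bucket $i=n$ the differentiability and Legendre-transform statements must be read on the interior $]-\infty,\beta^*[$ of the effective domain of $c_n$, which you acknowledge via the integrability discussion preceding Proposition \ref{Prop:Derivatives}.
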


\begin{proof}
Let $g_{\tilde{D}} := q_{\tilde{D}}/p$ be the piecewise-exponential Radon-Nikod\'ym derivative given in Theorem \ref{Thm:Minimiser}. We have
\begin{equation*}
H(q_{\tilde{D}} \| p) = \int_{K_i}^{K_{i+1}} q_{\tilde{D}}(S) \ln \frac{q_{\tilde{D}}(S)}{p(S)} dS =
\ln \alpha_i \int_{K_i}^{K_{i+1}} q_{\tilde{D}}(S) dS + \beta_i \int_{K_i}^{K_{i+1}} S q_{\tilde{D}}(S) dS,
\end{equation*}
since $g_{\tilde{D}}(S) = \alpha_i e^{\beta_i S}$ on $[K_i, K_{i+1}[$.
Then using \eqref{beta_i} and \eqref{alpha_i} the proof goes through as the proofs of Theorem 4.1, Theorem 4.2, Lemma 5.1 and Proposition 5.2 from \cite{NeriSchneider2011} given there by using the generalised versions of $c_i$ and $c_i^*$ introduced above and observing the absence of the minus sign in the definition of relative entropy.
\end{proof}

Notice that $p_i$ and $\bar{K}_i$ are given purely in terms of option prices, for all $i=0, ..., n$, and so is $H(q_{\tilde{D}} \| p)$.
Notice also that if the prior density $p$ already matches the call prices, then $\alpha_i = 1$ and $\beta_i = 0$ for all $i=0, ..., n$.
From the relationship $c_i^*(\bar{K}_i) = \beta_i \bar{K}_i - c_i(\beta_i)$, it follows that, in this case, $c_i^*(\bar{K}_i) = -c_i(\beta_i)$.
Since $\ln p_i = \ln \alpha_i + c_i(\beta_i) = -c_i^*(\bar{K}_i)$, the proposition above gives that $H(q_{\tilde{D}} \| p) = 0$, as expected.

The expression for the derivative of $H$ gives that if $\tilde{D}$ minimises $H$ (i.e., $\tilde{D}$ is a root of the gradient of $H$), then $g_{\tilde{D}}$ is continuous.
Furthermore, the MRED $q_{\tilde{D}} = g_{\tilde{D}}p$ has the same points of discontinuity as $p$.

Using these last results, essentially the same Newton-Raphson algorithm as in \cite{NeriSchneider2011} can be applied to find the relative entropy minimiser $q$.
The only differences are that the functions $c_0''$, ..., $c_n''$ must be replaced by their relative entropy versions \eqref{cpi_2prime} in the Hessian matrix of Proposition \ref{Prop:Summary}, and that in each iteration step, for a given set of digital prices, the algorithm of Section \ref{s:MRED_CallDigital} must be used to calculate the MRED, instead of its non-relative version.

\section{Probability Densities for Characteristic Function Models}
\label{s:CharacteristicFunctionModels}

In this section, we look at four models that are popular in equity derivatives pricing.
Our aim is to use the densities they give for the stock price at a fixed maturity as prior densities.
In two of the models we have chosen, the Black-Scholes model and the Variance Gamma model,
the density is analytically available.
In the other two, the Heston and the Sch\"obel-Zhu stochastic volatility models, it is not.
We therefore give a brief overview of these models and show how to calculate their densities in each case.

Let $\tilde{p}$ be the density of $x(T) := \ln S(T)$.
To simplify notation, we will usually write $x$ and $S$, respectively, when it is clear from the context that we have fixed the maturity $T$.
Then the density $p$ of $S$ itself is given by
\begin{equation*}
p(S) = \frac{1}{S}\ \tilde{p}(\ln S),
\end{equation*}
since $\int_{-\infty}^{\ln a} \tilde{p}(x) dx = \int_0^a \tilde{p}(\ln S) \frac{1}{S} dS = \int_0^a p(S) dS$ by change-of-variables formula.

If the characteristic function $\phi$ of $p$, given by
\begin{equation}
\label{CharacteristicFunction}
\phi(u) := {\mathbb E} \left[ e^{i u x} \right] = \int_{-\infty}^\infty e^{i u x} \tilde{p}(x) dx,
\end{equation}
is known,
as in the Heston \cite{Heston1993} or Sch\"{o}bel-Zhu \cite{SchoebelZhu1999} stochastic volatility models,
then $p$ can be obtained via Fourier inversion:
\begin{equation*}
\tilde{p}(x) = \frac{1}{2 \pi} \int_{-\infty}^\infty e^{-i u x} \phi(u) du.
\end{equation*}
Since $\tilde{p}$ is a real-valued function, it follows from \eqref{CharacteristicFunction} that $\phi(-u) = \overline{\phi(u)}$,
and we have
\begin{align}
\tilde{p}(x)
&= \frac{1}{2 \pi} \int_0^\infty e^{-i u x} \phi(u) du + \frac{1}{2 \pi} \int_0^\infty e^{i u x} \phi(-u) du
\nonumber
\\
&= \frac{1}{\pi} \int_0^\infty \Re \left[ e^{-i u x} \phi(u) \right] du,
\label{CFMDensity}
\end{align}
where $\Re [z] = (z + \overline{z}) / 2$ denotes the real part of a complex number $z$.
It can immediately be seen that an anti-derivative of $\tilde{p}$ is given by
\begin{equation*}
\tilde{P}_0(x) = -\frac{1}{2 \pi} \int_{-\infty}^\infty \frac{e^{-i u x}}{i u} \phi(u) du.
\end{equation*}
Furthermore, it can be shown in a similar way as the Fourier Inversion Theorem itself, that
$\lim_{x \to -\infty} \tilde{P}_0(x) = -\frac{1}{2}$ and $\lim_{x \to \infty} \tilde{P}_0(x) = \frac{1}{2}$,
and therefore the function
\begin{align*}
\tilde{P}(x)
&= \frac{1}{2} - \frac{1}{2 \pi} \int_{-\infty}^\infty \frac{e^{-i u x}}{i u} \phi(u) du
\\
&= \frac{1}{2} - \frac{1}{\pi} \int_0^\infty \Re \left[ \frac{e^{-i u x} \phi(u)}{i u} \right] du
\end{align*}
gives an expression for the distribution function.

For pricing, we use the general formulation of Bakshi and Madan \cite{BakshiMadan2000}.
This can be used for a large class of characteristic function models that contains the Heston, Sch\"{o}bel-Zhu and Variance Gamma models
(see section 2 in \cite{BakshiMadan2000}, in particular Case 2 on p.218).
We have $S > K$ if, and only if, $x > \ln K$. Let
\begin{align}
\label{Pi1}
\Pi_1 &:= 1 - \tilde{P}_S(\ln K) = \frac{1}{2} + \frac{1}{\pi} \int_0^\infty \Re \left[ \frac{e^{-i u \ln K} \phi(u - i)}{i u \phi(-i)} \right] du,
\\
\label{Pi2}
\Pi_2 &:= 1 - \tilde{P}(\ln K) = \frac{1}{2} + \frac{1}{\pi} \int_0^\infty \Re \left[ \frac{e^{-i u \ln K} \phi(u)}{i u} \right] du,
\end{align}
represent the probabilities of $S$ finishing in-the-money at time $T$ in case the stock $S$ itself or a risk-free bond is used as num\'{e}raire, respectively.
From \eqref{CharacteristicFunction} we can see that $\phi(-i) = {\mathbb E} \left[ e^x \right] = {\mathbb E} \left[ S \right]$,
so that the quotient
\begin{equation*}
\frac{\phi(u - i)}{\phi(-i)} = \int_{-\infty}^\infty e^{i u x} \frac{S}{{\mathbb E} \left[ S \right]} \tilde{p}(x) dx
\end{equation*}
contains the appropriate change of measure.

The price $C$ of a European call option on a stock paying a dividend yield $d$ is then obtained through the formula
\begin{equation}
C = e^{-dT} S \Pi_1 - e^{-rT} K \Pi_2,
\end{equation}
and the price $D$ of a European digital call option prices through
\begin{equation}
D = e^{-rT} \Pi_2,
\end{equation}
where $r$ is the risk-free, continuously-compounded interest rate.

The integrals in \eqref{Pi1}, \eqref{Pi2} must of course be truncated at some point $a$, which depends on the decay of the characteristic function of the model considered.

\subsection{The Black-Scholes Model}
\label{ss:BlackScholesModel}

Let the parameters $r$, $d$ and $T$ be given as above, and let $\sigma > 0$ be the volatility of the stock price.
In the Black-Scholes model, the logarithm $x(t) := \ln S(t)$ follows the SDE
\begin{equation*}
d x(t) = \left(r - d - \frac{1}{2} \sigma^2\right) dt + \sigma dW(t)
\end{equation*}
Define $\mu := \ln S(0) + \left(r - d - \frac{1}{2} \sigma^2\right) T$.
The density of $x(T)$ is normal and given by
\begin{equation}
\label{Black-Scholes Density}
\tilde{p}(x) = \frac{1}{\sqrt{2 \pi \sigma^2 T}} e^{-\frac{(x - \mu)^2}{2 \sigma^2 T}}.
\end{equation}
The characteristic function of $\tilde{p}$ has a very simple form and is given by
\begin{equation}
\label{Black-Scholes CF}
\phi(u) = e^{i u \mu - \frac{1}{2} \sigma^2 u^2 T}.
\end{equation}

Of course it is faster to use \eqref{Black-Scholes Density} directly instead of \eqref{Black-Scholes CF} and \eqref{CFMDensity},
but comparing these two methods lets one measure the additional computational burden.

\subsection{The Heston Model}
\label{ss:HestonModel}

One of the most popular models for derivative pricing in equity and FX markets is the stochastic volatility model introduced by Heston \cite{Heston1993}.
Let $x(t) := \ln S(t)$. The model is given, in the risk-neutral measure, by the following two SDE's:
\begin{align}
d x(t) &= (r - d - \frac{1}{2} v(t)) dt + \sqrt{v(t)} dW_1(t),
\\
d v(t) &= (\kappa \theta - (\kappa + \lambda) v(t)) dt + \sigma \sqrt{v(t)} dW_2(t),
\end{align}
where $\langle dW_1(t), dW_2(t) \rangle = \rho dt$.
The variance rate $v$ follows a Cox-Ingersoll-Ross square-root process \cite{CoxIngersollRoss1985}.

The parameter $\lambda$ represents the market price of volatility risk.
Since we are only interested in pricing, we always set $\lambda = 0$ in what follows (see Gatheral \cite{Gatheral2006}, chapter 2).

Heston calculates the characteristic function solution, but as pointed out in \cite{SchoebelZhu1999}, there is a (now well-known) issue
when taking the complex logarithm. To be clear, we therefore give the formulation of the characteristic function that we use.

Define
\begin{equation}
\label{negativeSquareRoot}
b := \kappa + \lambda, \quad
d_1 := \sqrt{(i \rho \sigma u - b)^2 + \sigma^2 \phi(i + u)},
\quad d_2 := -d_1
\quad\text{and}\quad
g := \frac{b - i \rho \sigma u + d_2}{b - i \rho \sigma u - d_2}.
\end{equation}
Introducing
\begin{align*}
C &:= (r - d) u i T + \frac{\kappa \theta}{\sigma^2} \left( (b - i \rho \sigma u + d_2) T + 2 \ln \frac{1 - g e^{d_2 T}}{1 - g} \right),
\\
D &:= \frac{b - i \rho \sigma u + d_2}{\sigma^2} \frac{1 - e^{d_2 T}}{1 - g e^{d_2 T}},
\end{align*}
the characteristic function of $x(T)$ is then given as
\begin{equation}
\phi(u) = e^{C + D v_0 + i u \ln S(0)}.
\end{equation}
Since implementations of the complex square-root usually return the root with non-negative real part ($d_1$),
the key is simply to take the other root ($d_2$), as is done in equation \eqref{negativeSquareRoot}.
As shown in \cite{AlbrecherMayerSchoutensTistaert2007} and \cite{LordKahl2010}, this takes care of the whole issue.

\subsection{The Sch\"{o}bel-Zhu Model}
\label{ss:SchoebelZhuModel}

The Sch\"{o}bel-Zhu model \cite{SchoebelZhu1999} is an extension of the Stein and Stein stochastic volatility model
\cite{SteinStein1991} with correlation $\rho \neq 0$ allowed (see also \cite{Clark2011}, \cite{Zhu2010}).
It is described, in the risk-neutral measure, by the following two SDE's:
\begin{align}
d x(t) &= (r - d - \frac{1}{2} v^2(t)) dt + v(t) dW_1(t),
\\
d v(t) &= \kappa (\theta - v(t)) dt + \sigma dW_2(t),
\end{align}
where $\langle dW_1(t), dW_2(t) \rangle = \rho dt$.
The volatility $v$ follows a mean-reverting Ornstein-Uhlenbeck process.

The characteristic function for this model is given by Sch\"{o}bel and Zhu in \cite{SchoebelZhu1999}.
As Lord and Kahl point out (\cite{LordKahl2010}, section 4.2), similar attention has to be paid when taking the complex logarithm
in this model's characteristic function as in Heston's. By directly relating the two characteristic functions (eq. 4.14),
they show how the Sch\"{o}bel-Zhu model can also be implemented safely. In the case study in \ref{ss:SPXMarket} presented in the following section
with SPX option data and a maturity of less than half a year, however, we observed no problems with the characteristic function originally proposed by Sch\"{o}bel and Zhu.

\subsection{The Variance Gamma Model}
\label{ss:VarianceGammaModel}

The Variance Gamma (VG) process was introduced in \cite{MadanCarrChang1998}, \cite{MadanSeneta1990}.
The density for $x = \ln S(T)$ is given explicitly in Theorem 1 in \cite{MadanCarrChang1998}.
Define
\begin{equation*}
\omega := \frac{1}{\nu} \ln (1 - \theta \nu - \frac{1}{2} \sigma^2 \nu)
\quad \text{and} \quad
\tilde{x} = x - \ln S(0) - (r - d + \omega) T.
\end{equation*}
Then the density $\tilde{p}$ is given by
\begin{equation}
\label{Variance Gamma Density}
\tilde{p}(x) = \frac{2 \exp(\theta \tilde{x} / \sigma^2)}{\nu^{T / \nu} \sqrt{2 \pi} \sigma \Gamma(\frac{T}{\nu})}
\cdot
\left( \frac{\tilde{x}^2}{\frac{2 \sigma^2}{\nu} + \theta^2} \right)^{\frac{T}{2 \nu} - \frac{1}{4}}
\cdot
K_{\frac{T}{\nu} - \frac{1}{2}} \left( \frac{1}{\sigma^2} \sqrt{\tilde{x}^2 \left( \frac{2 \sigma^2}{\nu} + \theta^2 \right) } \right),
\end{equation}
where $\Gamma$ is the Gamma-function and $K$ is the modified Bessel function of the second kind.

If the parameter $\nu$ is set to zero, the characteristic function of $p$ reduces to the Black-Scholes one given in \eqref{Black-Scholes CF}.
Otherwise, it is given by
\begin{equation}
\label{Variance Gamma CF}
\phi(u) = e^{i u \ln(S(0)) + (r - d + \omega) T} \left( 1 - i \theta \nu u + \frac{1}{2} \sigma^2 u^2 \nu \right)^{-\frac{T}{\nu}}.
\end{equation}
Lord and Kahl show (\cite{LordKahl2010}, section 4.1) that this formulation of the characteristic function is safe.

Again, as with the Black-Scholes model, since both the density and the characteristic function are available, it is possible to compare the two different methods
\eqref{Variance Gamma Density} vs. \eqref{Variance Gamma CF} and \eqref{CFMDensity}.

\section{Two Numerical Examples}
\label{s:NumericalExamples}

\subsection{A Fictitious Market and Black-Scholes and Heston Prior Densities}
\label{ss:BlackScholesMarket}

In our first example, we take a hypothetical market with $r=d=0, T=1, S=F=100,$
in which call option prices are given by the Black-Scholes formula with volatility $\sigma = 0.25$.
As prior densities, we take
\begin{itemize}
\item
$p_{BS}$, a Black-Scholes log-normal density, but this time with volatility $\sigma_p = 0.20$,
\item
$p_H$, a Heston density, with parameters $\kappa = 1, \theta = 0.04, \rho = -0.3, \sigma = 0.25, v_0 = 0.04$,
which leads to implied volatilities of $0.2418,	0.2125,	0.1923,	0.1855,	0.1884$ at strikes
$60, 80, 100, 120, 140$, respectively.
\end{itemize}

We calculate the Buchen-Kelly MED (Lebesgue prior), using the algorithm presented in \cite{NeriSchneider2011},
and the two Buchen-Kelly MREDs with priors $p_{BS}$ and $p_H$, using the generalised algorithm presented in section \ref{s:MRED_Call},
and compare the resulting call and digital option prices to see the influence of the priors.

The different call and digital option prices are reported in table \ref{tab:CallAndDigitalPrices}.
The densities were calculated using call prices at strikes
\begin{itemize}
\item
$K_0 = 0, K_1 = 100$
\item
$K_0 = 0, K_1 = 60, K_2 = 100, K_3 = 140$
\item
$K_0 = 0, K_1 = 60, K_2 = 80, K_3 = 100, K_4 = 120, K_5 = 140$
\end{itemize}
as respective constraints. These strikes are the ones in boldface in table \ref{tab:CallAndDigitalPrices}.

\begin{table}[htbp]
  \footnotesize
  \centering
  \caption{Comparison of Call and Digital Prices}
    \begin{tabular}{c|ccc|ccc}
    \addlinespace
    \toprule
    & & Call Prices & & & Digital Prices & \\
    Strike & MED BK & MRED BS & MRED Heston & MED BK & MRED BS & MRED Heston \\
    \midrule
    20    & 80.0538 & 80.0000 & 80.0000 & 0.9936 & 1.0000 & 1.0000 \\
    40    & 60.3244 & 60.0000 & 60.0094 & 0.9766 & 1.0000 & 0.9979 \\
    60    & 41.1698 & 40.0637 & 40.3043 & 0.9316 & 0.9841 & 0.9595 \\
    80    & 23.5389 & 21.9716 & 22.5717 & 0.8124 & 0.7758 & 0.7828 \\
    {\bf 100} & 9.9476 & 9.9476 & 9.9476 & 0.4962 & 0.4420 & 0.4763 \\
    120   & 3.6684 & 3.6071 & 3.3294 & 0.1830 & 0.2039 & 0.1977 \\
    140   & 1.3528 & 1.0596 & 1.0051 & 0.0675 & 0.0693 & 0.0593 \\
    160   & 0.4989 & 0.2688 & 0.3239 & 0.0249 & 0.0192 & 0.0175 \\
    180   & 0.1840 & 0.0621 & 0.1171 & 0.0092 & 0.0047 & 0.0056 \\
    \midrule
          &       &       &       &       &       &  \\
    20    & 80.0000 & 80.0000 & 80.0000 & 1.0000 & 1.0000 & 1.0000 \\
    40    & 60.0015 & 60.0003 & 60.0012 & 0.9997 & 0.9998 & 0.9996 \\
    {\bf 60} & 40.1454 & 40.1454 & 40.1454 & 0.9669 & 0.9753 & 0.9715 \\
    80    & 22.5812 & 22.0890 & 22.3433 & 0.7743 & 0.7818 & 0.7770 \\
    {\bf 100} & 9.9476 & 9.9476 & 9.9476 & 0.4646 & 0.4424 & 0.4633 \\
    120   & 3.7041 & 3.7051 & 3.5189 & 0.1945 & 0.1976 & 0.1926 \\
    {\bf 140} & 1.2139 & 1.2139 & 1.2139 & 0.0705 & 0.0707 & 0.0617 \\
    160   & 0.3800 & 0.3569 & 0.4669 & 0.0221 & 0.0227 & 0.0207 \\
    180   & 0.1190 & 0.0961 & 0.2067 & 0.0069 & 0.0065 & 0.0077 \\
    \midrule
          &       &       &       &       &       &  \\
    20    & 80.0001 & 80.0000 & 80.0000 & 1.0000 & 1.0000 & 1.0000 \\
    40    & 60.0033 & 60.0002 & 60.0014 & 0.9994 & 0.9999 & 0.9996 \\
    {\bf 60} & 40.1454 & 40.1454 & 40.1454 & 0.9726 & 0.9727 & 0.9726 \\
    {\bf 80} & 22.2656 & 22.2656 & 22.2656 & 0.7794 & 0.7781 & 0.7804 \\
    {\bf 100} & 9.9476 & 9.9476 & 9.9476 & 0.4510 & 0.4499 & 0.4510 \\
    {\bf 120} & 3.7059 & 3.7059 & 3.7059 & 0.1971 & 0.1961 & 0.1958 \\
    {\bf 140} & 1.2139 & 1.2139 & 1.2139 & 0.0700 & 0.0711 & 0.0689 \\
    160   & 0.3834 & 0.3545 & 0.4105 & 0.0221 & 0.0227 & 0.0211 \\
    180   & 0.1211 & 0.0948 & 0.1564 & 0.0070 & 0.0064 & 0.0071 \\
    \bottomrule
    \end{tabular}
  \label{tab:CallAndDigitalPrices}
\end{table}

We see that in the first case, where we had only the forward and an at-the-money call option as constraints,
there are significant differences in both call and digital prices.
The presence of the log-normal prior density makes MRED BS call prices cheaper compared to the original BS prices.
Of course, under the prior density itself, call prices were cheaper because of the lower volatility, and this effect seems to persist.
We also see that the fatter (exponential) tails of the MED translate into higher prices of deeply
in- or out-of-the-money call options when compared to the other two densities.

However, as we add call prices at more strikes as constraints, the differences become smaller.
In the third part of table \ref{tab:CallAndDigitalPrices}, the prices of both call and digital options are clearly converging.

\subsection{SPX Option Prices and Heston, Sch\"{o}bel-Zhu and VG Prior Densities}
\label{ss:SPXMarket}

In our second example, we look at call (ticker symbol SPX) and digital (BSZ) options on the Standard and Poor's $500$ stock index \cite{CBOE2010}.
The market data is from 18 July 2011. We consider those options which expire on 17 December 2011 and calibrate a Heston, Sch\"{o}bel-Zhu and VG model,
respectively, to call prices for 15 strikes $900, 950, ... 1550, 1600$ at constant intervals of $50$ using a Levenberg-Marquardt least-squares method
(\cite{PTVF2002}, \cite{Clark2011}).
The model parameters we obtained are given in table \ref{tab:ModelParameters}.
\begin{table}[htbp]
  \centering
  \caption{Model Parameters}
    \begin{tabular}{cccc}
    \addlinespace
    \toprule
    Parameter & Heston & SZ & VG \\
    \midrule
    $\kappa$ & 0.8568  & 1.6316  & n/a     \\
    $\theta$ & 0.0800  & 0.1731  & -0.2808 \\
    $\rho$   & -0.8016 & -0.8031 & n/a     \\
    $\sigma$ & 0.5473  & 0.3249  & 0.1535  \\
    $v_0$    & 0.0421  & 0.1887  & n/a     \\
    $\nu$    & n/a     & n/a     & 0.3638  \\
    \bottomrule
    \end{tabular}
  \label{tab:ModelParameters}
\end{table}

Figure \ref{Fig:1} shows the market implied volatility skew and the volatility skews generated by the three models.
Apart from the last two strikes at $1550$ and $1600$, the fit looks quite good in all three cases:
\begin{figure}[ht]
\centering
\includegraphics[width=\textwidth]{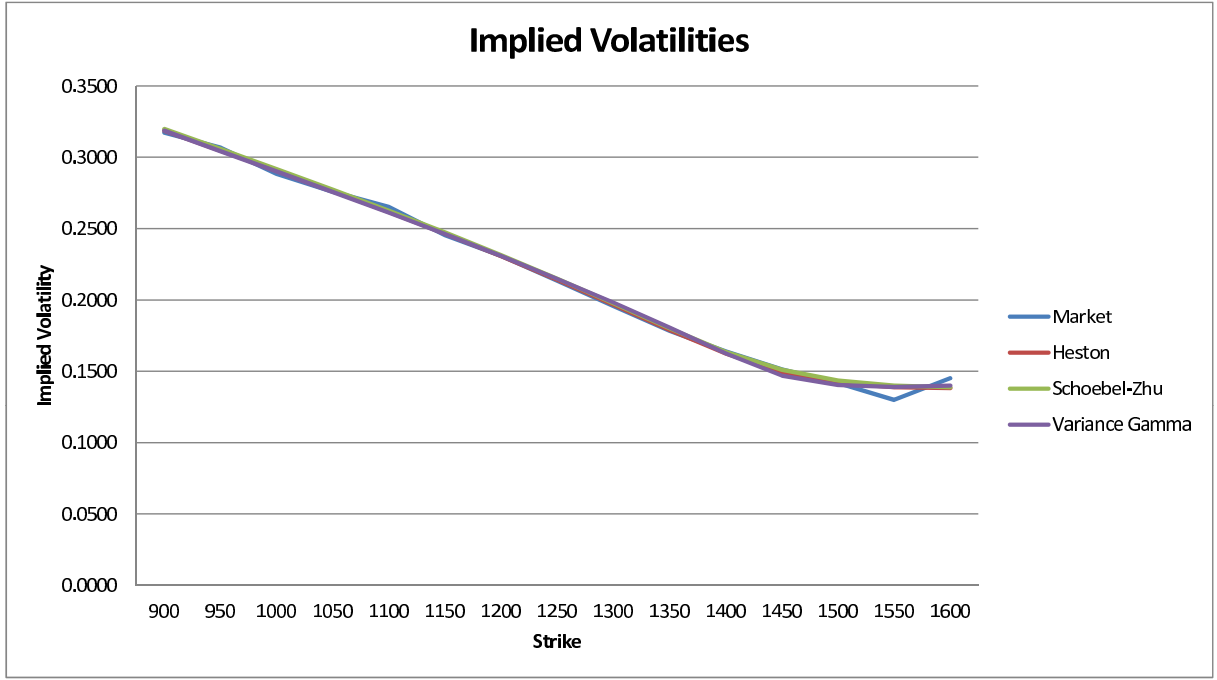}
\caption{Graphs of the four volatility skews}
\label{Fig:1}
\end{figure}

Using formulas for the densities directly, if available, or otherwise numerical inversion \eqref{CFMDensity},
we plot the densities for $S(T)$ given by these models in figure \ref{Fig:2}:
\begin{figure}[ht]
\centering
\includegraphics[width=\textwidth]{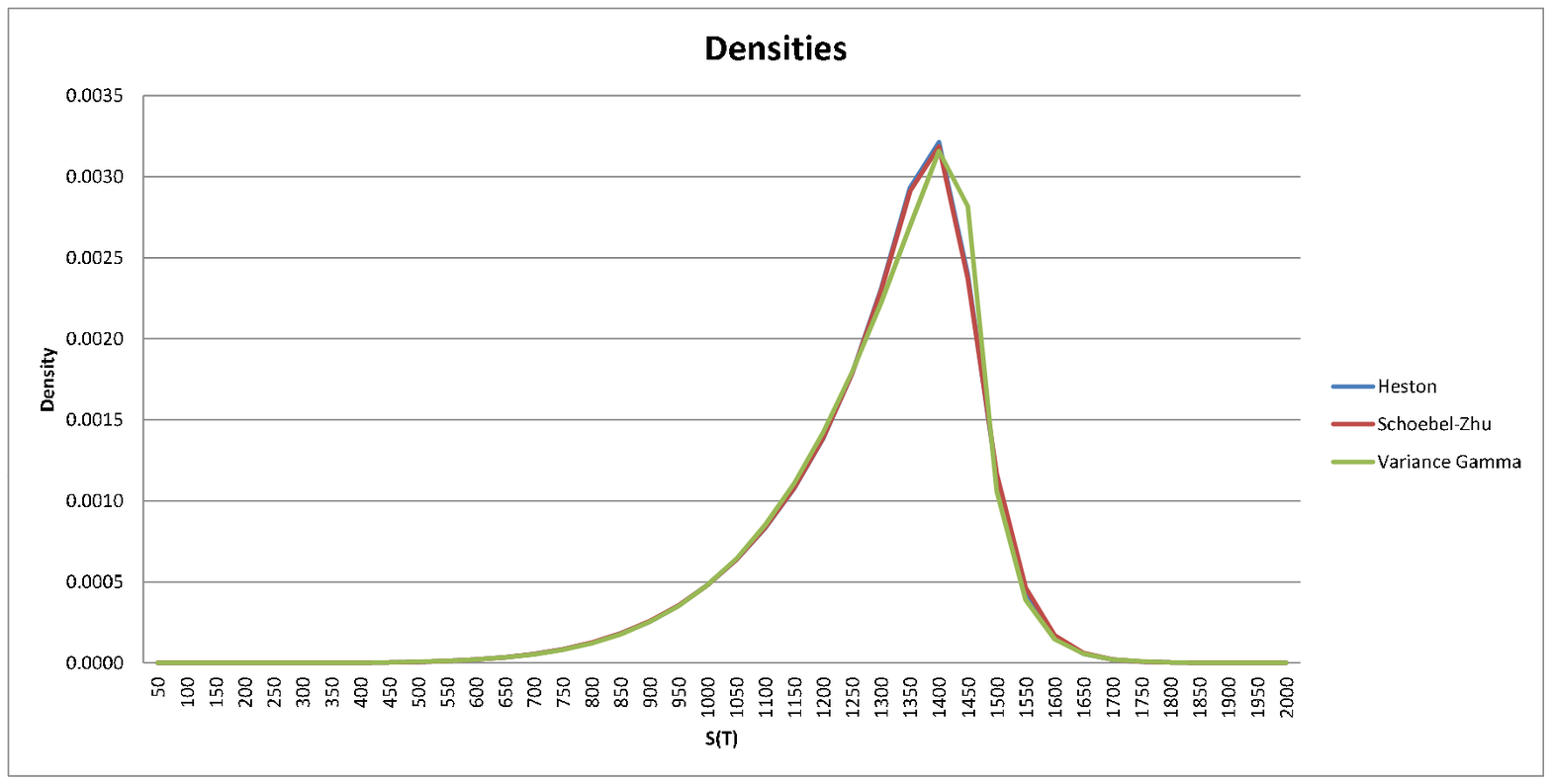}
\caption{Graphs of the three model densities}
\label{Fig:2}
\end{figure}
The Heston and Sch\"{o}bel-Zhu densities are almost indistinguishable from one another,
whereas the VG density has a somewhat different shape with a slightly thinner right tail.

\begin{table}[htbp]
  \centering
  \caption{Squared Errors and Relative Entropy $H(h \| p)$}
    \begin{tabular}{cccc}
    \addlinespace
    \toprule
    Model: & Heston & SZ    & VG \\
    \midrule
    $\sum (\sigma_i - \hat{\sigma}_i)^2$: & 1.59E-04 & 1.86E-04 & 1.74E-04 \\
    Relative Entropy: & 0.1311 & 0.1315 & 0.1374 \\
    \bottomrule
    \end{tabular}
  \label{tab:RelativeEntropy}
\end{table}

Table \ref{tab:RelativeEntropy} shows the sum of squared errors $\sum_{i=1}^{15} (\sigma_i^{SPX} - \hat{\sigma}_i^{model})^2$ between the market (SPX)
implied volatilities and the model implied volatilities.
The relative entropy $H(q \| p)$ can be seen as an alternative measure of fit, since by \eqref{DefinitionOfRelativeEntropyForDensities} it measures
how much the prior density $p$ needs to be deformed to obtain a density $q$ that perfectly matches the given market data.
Interestingly, the Heston model fits best under either criterion, but the order of the Sch\"{o}bel-Zhu and VG models is reversed in the two cases.

\begin{table}[htbp]
  \scriptsize
  \centering
  \caption{Digital Prices}
    \begin{tabular}{cccccccccc}
    \addlinespace
    \toprule
    {\bf Strike} & Market (Mid) & Call Spreads & MED BK & Heston & MRED Heston & SZ & MRED SZ & VG & MRED VG \\
    \midrule
    900   & 0.9500 & 0.9560 & 0.9609 & 0.9682 & 0.9612 & 0.9678 & 0.9612 & 0.9687 & 0.9607 \\
    950   & 0.9500 & 0.9540 & 0.9544 & 0.9532 & 0.9544 & 0.9527 & 0.9544 & 0.9537 & 0.9544 \\
    1000  & 0.9400 & 0.9335 & 0.9432 & 0.9325 & 0.9433 & 0.9319 & 0.9433 & 0.9331 & 0.9432 \\
    1050  & 0.9100 & 0.8935 & 0.8830 & 0.9048 & 0.8830 & 0.9041 & 0.8830 & 0.9052 & 0.8830 \\
    1100  & 0.8800 & 0.8685 & 0.8695 & 0.8683 & 0.8695 & 0.8675 & 0.8695 & 0.8679 & 0.8695 \\
    1150  & 0.8300 & 0.8250 & 0.8409 & 0.8207 & 0.8409 & 0.8198 & 0.8409 & 0.8191 & 0.8409 \\
    1200  & 0.7700 & 0.7555 & 0.7520 & 0.7595 & 0.7520 & 0.7585 & 0.7520 & 0.7560 & 0.7520 \\
    1250  & 0.6850 & 0.6810 & 0.6893 & 0.6809 & 0.6893 & 0.6797 & 0.6893 & 0.6759 & 0.6893 \\
    1300  & 0.5850 & 0.5745 & 0.5797 & 0.5792 & 0.5797 & 0.5783 & 0.5797 & 0.5756 & 0.5797 \\
    1350  & 0.4550 & 0.4365 & 0.4386 & 0.4482 & 0.4388 & 0.4481 & 0.4388 & 0.4527 & 0.4383 \\
    1400  & 0.3100 & 0.2885 & 0.2855 & 0.2913 & 0.2858 & 0.2924 & 0.2858 & 0.3060 & 0.2867 \\
    1450  & 0.1700 & 0.1605 & 0.1505 & 0.1464 & 0.1502 & 0.1489 & 0.1502 & 0.1443 & 0.1485 \\
    1500  & 0.0700 & 0.0743 & 0.0714 & 0.0590 & 0.0714 & 0.0616 & 0.0714 & 0.0537 & 0.0718 \\
    1550  & 0.0350 & 0.0235 & 0.0121 & 0.0216 & 0.0118 & 0.0230 & 0.0117 & 0.0201 & 0.0118 \\
    1600  & 0.0300 & 0.0025 & 0.0009 & 0.0077 & 0.0010 & 0.0082 & 0.0011 & 0.0077 & 0.0010 \\
    \bottomrule
    \end{tabular}
  \label{tab:DigitalPrices}
\end{table}

Finally, digital prices are reported in table \ref{tab:DigitalPrices}.
There are noticeable differences between market prices (although these must be taken with a pretty big pinch of salt
due to the poor liquidity and large bid-ask spreads), the Buchen-Kelly prices and the prices given by the three models.
However, regarding the three relative entropy distributions obtained using the different model priors,
it seems that the effect of the prior density on digital prices is negligible: all three MREDs basically agree with the Buchen-Kelly MED.

\section{Variance Swaps}
\label{s:VarianceSwaps}

In this section we recall the definition of a variance swap and the pricing formula based on replication through a log contract.
(For more details see \cite{CarrMadan1998, DemeterfiDermanKamalZou1999, Kahale2011} and the references therein.)

A variance swap is a forward contract on the annualized realized variance of the underlying asset over a period of time.
More precisely, given observation dates $t_0 < \dots < t_m$, the realized variance is defined by
\begin{equation*}
\sigma_\mathrm{real}^2 := \frac{252}{m} \sum_{i = 1}^m \left[\ln\left(\frac{S(t_i)}{S(t_{i-1})}\right)\right]^2,
\end{equation*}
where $S(t)$ denotes the spot price of the underlying asset at time $t$.
The number $252$ above is the annualization factor and reflects the typical number of business days in a year.
The payoff of a variance swap is given by
\begin{equation*}
N\cdot(\sigma_\mathrm{real}^2 - K_\mathrm{var}),
\end{equation*}
where $K_\mathrm{var}$ is the strike price for variance and $N$ is the notional amount of the swap.

Assume that $\{S(t)\}_{t\ge 0}$ follows a stochastic differential equation
\begin{equation}
\label{ItoProcess_S}
\frac{dS(t)}{S(t)} = \mu(t) dt + \sigma(t) dB(t),
\end{equation}
where $\{B(t)\}_{t\ge 0}$ is a standard Brownian motion and the drift $\{\mu(t)\}_{t\ge 0}$ and the volatility $\{\sigma(t)\}_{t\ge 0}$ are stochastic processes adapted to the natural filtration of $\{B(t)\}_{t\ge 0}$.

Typically, $K_\mathrm{var}$ is such that the theoretical price of the variance swap is null at inception and, in this case,
it is said to be the {\em fair variance swap rate} and denoted by $\sigma^2_\mathrm{fair}$.
The theoretical realized variance over the period $[0, T]$, and so $\sigma^2_\mathrm{fair}$, is given by
\begin{equation*}
\sigma^2_\mathrm{fair} := \frac 1T\ExpectMeas{}{\int_0^T \sigma(t)^2 dt}.
\end{equation*}

We shall now derive a formula for $\sigma^2_\mathrm{fair}$ based on the price of a {\em log contract}, that is, a derivative whose payoff at maturity $T$ is $\ln S(T)$.
Let $x(t) := \ln S(t)$ and apply It\^{o}'s formula to obtain
\begin{equation}
\label{ItoProcess_x}
dx(t) = \left( \mu(t) - \frac{1}{2} \sigma^2(t) \right) dt + \sigma(t) dB(t).
\end{equation}
Subtracting \eqref{ItoProcess_x} from \eqref{ItoProcess_S} gives
\begin{equation}
\label{ItoProcess_SMinusItoProcess_x}
\frac{dS(t)}{S(t)} - dx(t) = \frac{1}{2} \sigma^2(t) dt.
\end{equation}
Integrating from $0$ to $T$ and multiplying by $2/T$ gives
\begin{align*}
\frac1T \int_0^T \sigma^2(t) dt
&=
\frac2T \int_0^T \frac{dS(t)}{S(t)} - \frac2T \int_0^T dx(t) \\
&=
\frac2T \int_0^T \mu(t) dt + \frac2T \int_0^T \sigma(t)dB(t) - \frac2T (x(T) - x(0)).
\end{align*}
Finally, taking expectations yields
\begin{equation}
\label{ExpectationIntegratedItoProcess_SMinusItoProcess_x}
\sigma^2_\mathrm{fair} =
\frac2T \ExpectMeas{}{\int_0^T \mu(t) dt} + \frac2T \ln S(0) - \frac2T \ExpectMeas{}{\ln S(T)},
\end{equation}
since $\ExpectMeas{}{\int_0^T \sigma(t) dB(t)} = 0$.
Notice that $\ExpectMeas{}{\ln S(T)}$ is the price of a log contract.

\subsection{Maximum Entropy and Variance Swaps}

In this section we shall derive a relationship between the fair swap rate of a variance swap and the entropy of the underlying asset density.
This relationship follows from another one relating the entropies of the density $q$ of a random variable $S$ and the density of $x := \ln S$,
which is the subject of the next proposition.

\begin{proposition}
\label{Prop:Ex-Entropy}
The (non-relative) entropy $H(q)$ of a density $q$ of a random variable $S$ on $]0, \infty[$ and the entropy $\tilde{H}(\tilde{q})$ of the density $\tilde{q}$ of $x := \ln(S)$ on $]-\infty, \infty[$ are related by
\begin{equation}
\label{HSHxExpectation}
\tilde{H}(\tilde{q}) - H(q) = \ExpectMeas{}{x}.
\end{equation}
\end{proposition}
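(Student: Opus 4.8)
The statement is a change-of-variables identity between the differential entropy of $S$ and that of $x = \ln S$, so the plan is to push everything through the substitution $S = e^x$. The only structural input needed is the density transformation already recorded in the paper, $q(S) = \frac{1}{S}\,\tilde{q}(\ln S)$, which is equivalent to $\tilde{q}(x) = e^x q(e^x)$. First I would insert this into the definition $\tilde{H}(\tilde{q}) = -\int_{-\infty}^{\infty}\tilde{q}(x)\ln\tilde{q}(x)\,dx$ and expand the logarithm.

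The key step is the factorisation $\ln\tilde{q}(x) = \ln q(e^x) + x$, in which the additive summand $x$ is exactly the logarithm of the Jacobian $e^x$. This splits the entropy integral into two pieces. For the first, $-\int_{-\infty}^{\infty}\tilde{q}(x)\ln q(e^x)\,dx$, changing variables back to $S = e^x$ via $\tilde{q}(x)\,dx = q(S)\,dS$ and $\ln q(e^x) = \ln q(S)$ reproduces $-\int_{0}^{\infty}q(S)\ln q(S)\,dS = H(q)$. The second piece, $-\int_{-\infty}^{\infty} x\,\tilde{q}(x)\,dx$, is, up to its sign, the mean $\mathbb{E}[x] = \mathbb{E}[\ln S]$. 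Recombining the two pieces yields the asserted linear relation between $\tilde{H}(\tilde{q})$, $H(q)$ and $\mathbb{E}[x]$, the entire $\mathbb{E}[x]$ contribution being carried by this single log-Jacobian term.

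The main obstacle is not the algebra but the integrability bookkeeping that makes the splitting legitimate: separating one entropy integral into two requires knowing each converges, so I would assume (or impose as a hypothesis) that $H(q)$, $\tilde{H}(\tilde{q})$ and $\mathbb{E}[x]$ are all finite, thereby avoiding any $\infty - \infty$ indeterminacy. I would also track carefully the sign attached to the log-Jacobian term $x$, since that term alone produces the $\mathbb{E}[x]$ on the right-hand side; a closed-form sanity check with a log-normal $S$ (for which $\tilde{q}$ is Gaussian and both entropies are explicit) both confirms the identity and pins down that sign.
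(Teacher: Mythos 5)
Your change-of-variables mechanics are sound, and your route is essentially the paper's own (the paper starts from $H(q)$ and substitutes $S = e^x$; you start from $\tilde{H}(\tilde{q})$ and substitute back --- the direction is immaterial). But the final step, as you state it, fails on a sign. With the definitions you explicitly adopt, $\tilde{H}(\tilde{q}) = -\int \tilde{q}\ln\tilde{q}\,dx$ and $H(q) = -\int q\ln q\,dS$, your own splitting gives
\[
\tilde{H}(\tilde{q}) \;=\; -\int_{-\infty}^{\infty} \tilde{q}(x)\ln q(e^x)\,dx \;-\; \int_{-\infty}^{\infty} x\,\tilde{q}(x)\,dx \;=\; H(q) \;-\; \ExpectMeas{}{x},
\]
that is, $\tilde{H}(\tilde{q}) - H(q) = -\ExpectMeas{}{x}$ --- the \emph{negation} of \eqref{HSHxExpectation}. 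Your proposed log-normal sanity check does not confirm the stated identity; it refutes it: for $x \sim N(\mu,\sigma^2)$ one has $\tilde{H}(\tilde{q}) = \tfrac12\ln(2\pi e\sigma^2)$, while the log-normal differential entropy is $H(q) = \mu + \tfrac12\ln(2\pi e\sigma^2)$, so $\tilde{H}(\tilde{q}) - H(q) = -\mu = -\ExpectMeas{}{x}$. So the claim that ``recombining the two pieces yields the asserted linear relation'' is exactly where the proof breaks: what you have proven is the relation with the opposite sign.

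The mismatch is not an algebraic slip on your part but a sign-convention inconsistency in the paper that your (correct) computation exposes. The paper's own proof opens with $H(q) = \int_0^\infty q(S)\ln q(S)\,dS$ --- \emph{without} the minus sign --- and likewise for $\tilde{H}$; under that flipped convention the same chain of equalities ends at $H(q) = \tilde{H}(\tilde{q}) - \ExpectMeas{}{x}$, which is \eqref{HSHxExpectation} as stated. But that convention contradicts the definition of entropy given in section \ref{s:MRED_Call}, $H(q) = -\int_0^\infty q\ln q\,dS$, which is the one you (reasonably) used. In short: under the section-\ref{s:MRED_Call} definition the correct statement is $H(q) - \tilde{H}(\tilde{q}) = \ExpectMeas{}{x}$, and your argument proves precisely that; under the minus-free convention used inside the paper's proof, \eqref{HSHxExpectation} is correct as written and your argument proves it after flipping both entropy signs. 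What you cannot have is the minus-sign definitions \emph{and} the conclusion with $+\ExpectMeas{}{x}$, which is what your write-up currently asserts. To repair it, either flip the sign in your conclusion or state explicitly that $H$ here denotes $\int q\ln q$; your integrability caveat (finiteness of $H(q)$, $\tilde{H}(\tilde{q})$ and $\ExpectMeas{}{x}$ to avoid an $\infty-\infty$ splitting) is fine and matches the paper's implicit assumption.
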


\begin{proof} Recall that the densities $q$ and $\tilde{q}$ are related by
\begin{equation*}
q(S) = \frac 1S\ \tilde{q}(\ln S) = \tilde{q}(x)e^{-x}.
\end{equation*}
Hence, the change of measure $dS = e^xdx$, gives
\begin{align*}
H(q) &= \int_0^\infty q(S) \ln q(S) dS
= \int_{-\infty}^\infty \tilde{q}(x) e^{-x} \ln \left( \tilde{q}(x) e^{-x} \right) e^x dx
\\
&= \int_{-\infty}^\infty \tilde{q}(x) \ln \tilde{q}(x) dx - \int_{-\infty}^\infty x \tilde{q}(x) dx
= H(\tilde{q}) - \ExpectMeas{}{x}.
\end{align*}
\end{proof}

\begin{corollary}
\label{Cor:FairVarianceSwapRate-Entropy}
Consider an asset whose price $S(t)$ at time $t$ follows \eqref{ItoProcess_S}.
Let $q$ be the density of $S(T)$, for some $T>0$, and $\tilde{q}$ be the density of $x(T) := \ln S(T)$.
Then the fair variance swap rate of a variance swap maturing at time $T$ is given by
\begin{equation}
\label{HSHxVariance}
\sigma^2_\mathrm{fair} = \frac2T\ExpectMeas{}{\int_0^T \mu(t) dt} + \frac2T \ln S(0) - \frac2T \left( \tilde{H}(\tilde{q}) - H(q) \right).
\end{equation}
\end{corollary}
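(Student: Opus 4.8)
The plan is to obtain \eqref{HSHxVariance} as essentially a one-line substitution that combines the two results already established in this section: the log-contract representation \eqref{ExpectationIntegratedItoProcess_SMinusItoProcess_x} of the fair variance swap rate, and the entropy identity of Proposition \ref{Prop:Ex-Entropy}. The whole point of having proved those two statements first is that the corollary should then follow with no further analysis.

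First I would start from \eqref{ExpectationIntegratedItoProcess_SMinusItoProcess_x}, which under the diffusion assumption \eqref{ItoProcess_S} already expresses
\begin{equation*}
\sigma^2_\mathrm{fair} = \frac2T \ExpectMeas{}{\int_0^T \mu(t) dt} + \frac2T \ln S(0) - \frac2T \ExpectMeas{}{\ln S(T)}.
\end{equation*}
The only term here not yet in the form demanded by \eqref{HSHxVariance} is the log-contract price $\ExpectMeas{}{\ln S(T)}$. Next I would apply Proposition \ref{Prop:Ex-Entropy} to the terminal random variable $S = S(T)$, whose density is $q$, and correspondingly to $x = x(T) = \ln S(T)$, whose density is $\tilde q$. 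The proposition gives $\ExpectMeas{}{x} = \tilde{H}(\tilde q) - H(q)$, that is, $\ExpectMeas{}{\ln S(T)} = \tilde{H}(\tilde q) - H(q)$. Substituting this identity into the display above yields \eqref{HSHxVariance} at once.

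There is no genuine obstacle in the computation, since the work has been front-loaded into the derivation of \eqref{ExpectationIntegratedItoProcess_SMinusItoProcess_x} and into Proposition \ref{Prop:Ex-Entropy}. The only point I would verify is that the hypotheses underlying those two results remain in force for the terminal density arising from \eqref{ItoProcess_S}: in particular that $\ExpectMeas{}{\ln S(T)}$ is finite, equivalently that the entropies $H(q)$ and $\tilde{H}(\tilde q)$ are both finite so that their difference is well defined, and that the expectation identifying $\ExpectMeas{}{x}$ in the proposition is taken under the same (risk-neutral) measure as the log-contract expectation in \eqref{ExpectationIntegratedItoProcess_SMinusItoProcess_x}. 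Once this consistency is noted, the substitution is immediate and the proof is complete.
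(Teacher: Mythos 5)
Your proof is correct and is essentially identical to the paper's own argument: the authors also obtain \eqref{HSHxVariance} by substituting the identity $\ExpectMeas{}{\ln S(T)} = \tilde{H}(\tilde q) - H(q)$ from Proposition \ref{Prop:Ex-Entropy} into the log-contract formula \eqref{ExpectationIntegratedItoProcess_SMinusItoProcess_x}. Your additional remarks on finiteness of the entropies and consistency of the underlying measure are sensible sanity checks but do not change the argument.
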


\begin{proof}
This follows immediately from the last proposition and \eqref{ExpectationIntegratedItoProcess_SMinusItoProcess_x}.
\end{proof}

When the density $q$ of $S(T$) is known, the price of a log contract $\ExpectMeas{}{\ln S(T)}$ can be computed through numerical integration.
Moreover, when $q$ is the MED, that is, in the non-relative entropy case, we show how this price can be computed analytically.
By definition, the expectation is given by
\begin{equation*}
\ExpectMeas{}{\ln S(T)}
= \int_0^\infty \ln(S) q(S) dS
= \sum_{i=0}^n \alpha_i \int_{K_i}^{K_{i+1}} \ln(S) e^{\beta_i S} dS.
\end{equation*}
For $i\notin\{0, n\}$ and $\beta_i \neq 0$ we have
\begin{equation}
\label{EiPrimitiveFunction}
\alpha_i \int_{K_i}^{K_{i+1}} \ln(S) e^{\beta_i S} dS = \frac{\alpha_i}{\beta_i} \left[ e^{\beta_i S} \ln S - \texttt{Ei}(\beta_i S)  \right]_{K_i}^{K_{i+1}},
\end{equation}
where $\texttt{Ei}(s) := -\int_{-s}^\infty \frac{e^{-t}}{t} dt$ is the exponential integral function.
Note that if $\beta_i = 0$, then of course
\begin{equation*}
\int_{K_i}^{K_{i+1}} \ln(S) e^{\beta_i S} dS = \int_{K_i}^{K_{i+1}} \ln(S) dS = \left[ S \ln S - S \right]_{K_i}^{K_{i+1}}.
\end{equation*}
The exponential integral function has a pole at $0$, and therefore we cannot evaluate \eqref{EiPrimitiveFunction} directly at $K_0 = 0$.
From the series representation
\begin{equation*}
\texttt{Ei}(s) = \gamma + \ln |s| + \sum_{k=1}^\infty \frac{s^k}{k \; k!}, \; s\neq 0,
\end{equation*}
it follows that we have in the limit
\begin{equation*}
\lim_{S \to 0} e^{\beta_0 S} \ln S - \texttt{Ei}(\beta_0 S) = -\gamma - \ln |\beta_0|,
\end{equation*}
where $\gamma = 0.5772156649...$ is the Euler-Mascheroni constant.
At the other extreme, at $K_{n+1} = \infty$, we have in the limit
\begin{equation*}
\lim_{S \to \infty} e^{\beta_n S} \ln S - \texttt{Ei}(\beta_n S) = 0,
\end{equation*}
since $\beta_n < 0$.
Putting this together gives a closed formula for \eqref{HSHxExpectation}.

\subsection{Numerical Examples}

In the first example, the market is given as in subsection \ref{ss:BlackScholesMarket} by a Black-Scholes model with volatility $\sigma = 0.25$,
with the same sets of $1, 3$ and $5$ strikes.
Table \ref{tab:VarianceSwap1} shows three quantities obtained from (non-relative) Buchen-Kelly MEDs fitted to the forward and call prices at these strikes:
the fair variance swap rate $\sigma^2_\mathrm{fair}$, its square-root for comparison with implied volatilities, and the entropy.
The average volatility $\sigma_\mathrm{fair}$ and the entropy can be seen as two different measures of the dispersion of $S(T)$.
As the number of strikes increases, $\sigma_\mathrm{fair}$ and the entropy both decrease, with $\sigma_\mathrm{fair}$ converging towards the Black-Scholes volatility $\sigma = 0.25$.

\begin{table}[htbp]
  \centering
  \caption{Fair Variance Swap Rate and Entropy}
    \begin{tabular}{r|rrr}
    \addlinespace
    \toprule
          MED                         & 1 Strike & 3 Strikes & 5 Strikes \\
    \midrule
          $\sigma_\mathrm{fair}$      & 0.3130 & 0.2545 & 0.2506  \\
          $\sigma^2_\mathrm{fair}$    & 0.0980 & 0.0647 & 0.0628  \\
          Entropy                     & 4.6801 & 4.6165 & 4.6077  \\
    \bottomrule
    \end{tabular}
  \label{tab:VarianceSwap1}
\end{table}


In the second example, we use the same reference market as above, but now we include a prior Black-Scholes density and calculate the MREDs matching
the forward and call prices at $1, 3$ and $5$ strikes. The prior density is characterized by its volatility $\sigma_p$.
Table \ref{tab:VarianceSwap2} shows that increasing $\sigma_p$ has the effect of increasing the fair variance swap rate of the MRED.
However, we see that as we add more constraints, this effect is diminished.
In the case of $5$ strikes, it is barely noticeable.
Note that in the case $\sigma_p = 0.25$ where the prior density already matches the given constraints, the MRED is equal to the prior,
and we recover the volatility of the Black-Scholes process as the square-root of the fair variance swap rate.

\begin{table}[htbp]
  \centering
  \caption{Black-Scholes Prior and Fair Variance Swap Rate}
    \begin{tabular}{r|rr|rr|rr}
    \addlinespace
    \toprule
          prior & \multicolumn{2}{|c|}{MRED 1 Strike} & \multicolumn{2}{|c|}{MRED 3 Strikes} & \multicolumn{2}{|c}{MRED 5 Strikes} \\
          BS $\sigma_p$ & $\sigma_\mathrm{fair}$ & $\sigma^2_\mathrm{fair}$ & $\sigma_\mathrm{fair}$ & $\sigma^2_\mathrm{fair}$ & $\sigma_\mathrm{fair}$ & $\sigma^2_\mathrm{fair}$ \\
    \midrule
          0.20  & 0.2427 & 0.0589 & 0.2476 & 0.0613 & 0.2497 & 0.0624 \\
          0.25  & 0.2500 & 0.0625 & 0.2500 & 0.0625 & 0.2500 & 0.0625 \\
          0.30  & 0.2559 & 0.0655 & 0.2514 & 0.0632 & 0.2502 & 0.0626 \\
          0.35  & 0.2608 & 0.0680 & 0.2523 & 0.0637 & 0.2503 & 0.0626 \\
          0.40  & 0.2650 & 0.0702 & 0.2529 & 0.0640 & 0.2503 & 0.0627 \\
          0.45  & 0.2688 & 0.0723 & 0.2533 & 0.0642 & 0.2504 & 0.0627 \\
          0.50  & 0.2723 & 0.0741 & 0.2536 & 0.0643 & 0.2504 & 0.0627 \\
    \bottomrule
    \end{tabular}
  \label{tab:VarianceSwap2}
\end{table}


In the third example, summarized in Table \ref{tab:VarianceSwap3}, we proceed as in the second one, but now with a Heston density as the prior.
The Heston parameters are the same as in subsection \ref{ss:BlackScholesMarket}, i.e. $\kappa = 1, \theta = 0.04, \rho = -0.3, v_0 = 0.04$,
but now we vary the volatility $\sigma$ of the variance and measure its impact on the fair variance swap rate of the MRED.
In the case of $1$ strike, we see clearly that this impact is very strong.
However, we notice again that increasing the number of strikes quickly diminishes the strength of the impact.

\begin{table}[htbp]
  \centering
  \caption{Heston Prior and Fair Variance Swap Rate}
    \begin{tabular}{r|rr|rr|rr}
    \addlinespace
    \toprule
          prior & \multicolumn{2}{|c|}{MRED 1 Strike} & \multicolumn{2}{|c|}{MRED 3 Strikes} & \multicolumn{2}{|c}{MRED 5 Strikes} \\
          Heston $\sigma$ & $\sigma_\mathrm{fair}$ & $\sigma^2_\mathrm{fair}$ & $\sigma_\mathrm{fair}$ & $\sigma^2_\mathrm{fair}$ & $\sigma_\mathrm{fair}$ & $\sigma^2_\mathrm{fair}$ \\
    \midrule
          0.10  & 0.2448 & 0.0599 & 0.2485 & 0.0618 & 0.2499 & 0.0624 \\
          0.20  & 0.2506 & 0.0628 & 0.2500 & 0.0625 & 0.2503 & 0.0627 \\
          0.30  & 0.2600 & 0.0676 & 0.2520 & 0.0635 & 0.2506 & 0.0628 \\
          0.40  & 0.2890 & 0.0835 & 0.2535 & 0.0643 & 0.2507 & 0.0629 \\
          0.50  & 0.3237 & 0.1048 & 0.2544 & 0.0647 & 0.2507 & 0.0629 \\
          0.60  & 0.3464 & 0.1200 & 0.2555 & 0.0653 & 0.2508 & 0.0629 \\
          0.70  & 0.3711 & 0.1377 & 0.2565 & 0.0658 & 0.2508 & 0.0629 \\
    \bottomrule
    \end{tabular}
  \label{tab:VarianceSwap3}
\end{table}

\section{Conclusion}
\label{s:Conclusion}

In this article we generalise the algorithm presented in \cite{NeriSchneider2011} to the relative entropy case.
The algorithm allows for efficient computation of a risk-neutral probability density that exactly gives European
call option prices quoted in the market, while staying as close as possible to a given prior density under the criterion of relative entropy.

It is not necessary to have an analytic expression for the prior density in question. In practice, several popular equity and FX models work through
their characteristic functions and numerical Fourier inversion techniques. We pick two of these as examples, namely the Heston and the Sch\"{o}bel-Zhu
stochastic volatility models, and show how they nevertheless can be used to provide the prior density and incorporated into our algorithm.
In other cases, analytic expressions for the density are available, such as for the Black-Scholes model and the Variance Gamma model%
\footnote{
For example, in C++ this is easily implemented using the functions boost::math::tgamma and boost::math::cyl\_bessel\_k in Boost \cite{Boost2011}.}%
, and we also incorporate these into our analysis.

As an application, we study the impact the choice of prior density has.
In a first, purely hypothetical scenario, we assume that only the prices of a few options are quoted.
We observe that using a prior density does indeed lead to significantly different option prices when
compared to pricing with a pure log-normal density or a piece-wise exponential Buchen-Kelly density.

In a second scenario we use option price data for S\&P$500$ index options for a fixed maturity traded on the CBOE.
We calibrate three different models to these data and observe that, although the models generate noticeably different digital option prices,
the prices obtained when using minimum relative entropy densities, with these models for the prior densities, agree almost perfectly.
Furthermore, these prices are essentially the same as those given by the (non-relative) Buchen-Kelly density itself.
In other words, in a sufficiently liquid market the effect of the prior density seems to vanish almost completely.

We also study variance swaps and establish a formula that relates their fair swap rate to entropy.
In the case of MEDs, we give an explicit formula for the fair swap rate.
In the case of MREDs, we study the impact the prior density has on the fair swap rate and see that, again,
while it has a substantial effect when constraints exist at only a very small number of strikes,
this effect diminishes rapidly as more constraints are added.

\bibliographystyle{plain}
\bibliography{../bibtex/articles,../bibtex/books,../bibtex/websites}

\end{document}